\newcommand{\supp}{\mathop{support}}
\newcommand{\suppfind}[1]{support-finding$_{{#1}}$}
\newtheorem{theorem}{Theorem}
\newtheorem{lemma}{Lemma}
\newtheorem{corollary}{Corollary}
\newtheorem{claim}{Claim}
\newtheorem{remark}{Remark}
\DeclareMathOperator*{\E}{\mathbb{E}}
\let\Pr\relax
\DeclareMathOperator*{\Pr}{\mathbb{P}}
\newcommand{\success}{\textsf{SUCC}\xspace}
\newcommand{\enc}{\textsf{ENC}\xspace}
\newcommand{\dec}{\textsf{DEC}\xspace}
\newcommand{\s}{\textsf{s}\xspace}
\newcommand{\R}{\mathbb{R}}
\newcommand{\F}{\mathbb{F}}
\newcommand{\sketch}{\mathsf{Alice}}
\newcommand{\query}{\mathsf{Bob}}
\newcommand{\eps}{\varepsilon}
\newcommand{\ur}{\mathbf{UR}\xspace}
\newcommand{\randcom}{\mathbf{R}}
\newcommand{\findup}[1]{\textsf{FindDuplicate}$({#1})$\xspace}
\title{Optimal lower bounds for universal relation, samplers, and finding duplicates}
\author{Jelani Nelson\thanks{Harvard University. \texttt{minilek@seas.harvard.edu}. Supported by NSF grant IIS-1447471 and
   CAREER award CCF-1350670, ONR Young Investigator award N00014-15-1-2388, and a Google Faculty Research Award.}
  \and Jakub Pachocki\thanks{OpenAI. \texttt{jakub@openai.com}. Work done while affiliated with Harvard University, under the support of ONR grant N00014-15-1-2388.}
  \and Zhengyu Wang\thanks{Harvard University. \texttt{zhengyuwang@g.harvard.edu}. Supported by NSF grant CCF-1350670.}}
\begin{document}

\setcounter{page}{0}

\maketitle

\thispagestyle{empty}

\begin{abstract}
In the communication problem $\ur$ ({\em universal relation}) \cite{KarchmerRW95}, Alice and Bob respectively receive $x$ and $y$ in $\{0,1\}^n$ with the promise that $x\neq y$. The last player to receive a message must output an index $i$ such that $x_i\neq y_i$. We prove that the randomized one-way communication complexity of this problem in the public coin model is exactly $\Theta(\min\{n, \log(1/\delta)\log^2(\frac{n}{\log(1/\delta)})\})$ bits for failure probability $\delta$.  We also show that for a more general problem $\ur_k$, in which the output must be $\min\{k, |\supp(x-y)|\}$ distinct indices $i$ such that $x_i\neq y_i$, the optimal randomized one-way communication complexity is $\Theta(\min\{n, t\log^2(n/t)\})$ bits where $t = \max\{k, \log(1/\delta)\}$. Our lower bounds hold even if promised $\supp(y)\subset \supp(x)$.

As a corollary, we obtain optimal lower bounds for sampling problems in strict turnstile streams, as well as for the problem of finding duplicates in a stream.  Specifically, consider a high-dimensional vector $x\in\R^n$ receiving streaming updates of the form ``$x_i\leftarrow x_i + \Delta$'', and in response to a query we must with probability $1-\delta$ recover {\em any} element $i\in\supp(x) = \{j : x_j \neq 0\}$. Our lower bound implies the first optimal $\Omega(\log(1/\delta)\log^2 n)$-bit space lower bound for any solution to this problem as long as $\delta > 2^{-n^{.99}}$, matching an upper bound of \cite{JowhariST11}. Our result thus implies optimal lower bounds for the so-called ``$\ell_p$-sampling'' problem for any $0\le p < 2$ in the strict turnstile model, as well as variations in which a query response must include $\min\{k, |\supp(x)|\}$ elements from $\supp(x)$.  Our lower bounds also do not need to use large weights, and hold even if it is promised that $x\in\{0,1\}^n$ at all points in the stream.

More easily explained in the language of the above support-finding turnstile streaming problem, our lower bound operates by showing that any algorithm $\mathcal{A}$ solving that problem in low memory can be used to encode subsets of $[n]$ of certain sizes into a number of bits below the information theoretic minimum. Our encoder makes adaptive queries to $\mathcal{A}$ throughout its execution, but done carefully so as to not violate correctness. This is accomplished by injecting random noise into the encoder's interactions with $\mathcal{A}$, which is loosely motivated by techniques in differential privacy. Our correctness analysis involves understanding the ability of $\mathcal{A}$ to correctly answer adaptive queries which have positive but bounded mutual information with $\mathcal{A}$'s internal randomness, and may be of independent interest in the newly emerging area of adaptive data analysis with a theoretical computer science lens.
\end{abstract}

\newpage

\section{Introduction}\label{sec:intro}
In turnstile $\ell_0$-sampling, a vector $z\in\R^n$ starts as the zero vector and receives coordinate-wise updates of the form ``$z_i \leftarrow z_i + \Delta$'' for $\Delta\in\{-M,-M+1,\ldots,M\}$. During a query, one must return a uniformly random element from $\supp(x) = \{i : z_i\neq 0\}$. The problem was first defined in \cite{FrahlingIS08}, where a data structure (or ``sketch'') for solving it was used to estimate the Euclidean minimum spanning tree, and to provide $\eps$-approximations of a point set $P$ in a geometric space (that is, one wants to maintain a subset $S\subset P$ such that for any set $R$ in a family of bounded VC-dimension, such as the set of all axis-parallel rectangles, $||R\cap S|/|S| - |R\cap P|/|P|| < \eps$). Sketches for $\ell_0$-sampling were also used to solve various dynamic graph streaming problems in \cite{AhnGM12a} and since then have been crucially used in almost all known dynamic graph streaming algorithms\footnote{The spectral sparsification algorithm of \cite{KapralovLMMS14} is a notable exception.}, such as for: connectivity, $k$-connectivity, bipartiteness, and minimum spanning tree \cite{AhnGM12a}, subgraph counting, minimum cut, and cut-sparsifier and spanner computation \cite{AhnGM12b}, spectral sparsifiers \cite{AhnGM13}, maximal matching \cite{ChitnisCHM15}, maximum matching \cite{AhnGM12a,BuryS15,Konrad15,AssadiKLY16,ChitnisCEHMMV16,AssadiKL17}, vertex cover \cite{ChitnisCHM15,ChitnisCEHMMV16}, hitting set, $b$-matching, disjoint paths, $k$-colorable subgraph, and several other maximum subgraph problems \cite{ChitnisCEHMMV16}, densest subgraph \cite{BhattacharyaHNT15,McGregorTVV15,EsfandiariHW16}, vertex and hyperedge connectivity \cite{GuhaMT15}, and graph degeneracy \cite{FarachColtonT16}. For an introduction to the power of $\ell_0$-sketches in designing dynamic graph stream algorithms, see the recent survey of McGregor \cite[Section 3]{McGregor14}. Such sketches have also been used outside streaming, such as in distributed algorithms \cite{HegemanPPSS15,Pandurangan0S16} and data structures for dynamic connectivity \cite{KapronKM13,Wang15,GibbKKT15}.

Given the rising importance of $\ell_0$-sampling in algorithm design, a clear task is to understand the exact complexity of this problem. The work \cite{JowhariST11} gave an $\Omega(\log^2 n)$-bit space lower bound for data structures solving the case $M=1$ which fail with constant probability, and otherwise whose query responses are $(1/3)$-close to uniform in statistical distance. They also gave an upper bound for $M \le \mathop{poly}(n)$ with failure probability $\delta$, which in fact gave $\min\{\|z\|_0, \Theta(\log(1/\delta))\}$ uniform samples from the support of $z$, using space $O(\log^2 n \log(1/\delta))$ bits (here $\|z\|_0$ denotes $|\supp(z)|$). Thus we say their data structure actually solves the harder problem of $\ell_0$-sampling$_{\Theta(\log(1/\delta))}$ with failure probability $\delta$, where in $\ell_0$-sampling$_k$ the goal is to recover $\min\{\|z\|_0, k\}$ uniformly random elements, without replacement, from $\supp(z)$.  The upper and lower bounds in \cite{JowhariST11} thus match up to a constant factor for $k = 1$ and $\delta$ a constant.

\paragraph{Universal relation.} The work of \cite{JowhariST11} obtains its lower bound for $\ell_0$-sampling (and some other problems) via reductions from {\em universal relation} ($\ur$). The problem $\ur$ was first defined in \cite{KarchmerRW95} and arose in connection with work of Karchmer and Wigderson on circuit depth lower bounds \cite{KarchmerW90}. For $f:\{0,1\}^n\rightarrow\{0,1\}$, $D(f)$ is the minimum depth of a fan-in $2$ circuit over the basis $\{\neg, \vee, \wedge\}$ computing $f$. Meanwhile, the (deterministic) communication complexity $C(f)$ is defined as the minimum number of bits that need to be communicated in a correct protocol for Alice and Bob to solve the following communication problem: Alice receives $x\in f^{-1}(0)$ and Bob receives $y\in f^{-1}(1)$ (and hence in particular $x\neq y$), and they must both agree on an index $i\in[n]$ such that $x_i\neq y_i$. It is shown in \cite{KarchmerW90} that $D(f) = C(f)$, where they then used this correspondence to show a tight $\Omega(\log^2 n)$ depth lower bound on monotone circuits solving undirected $s$-$t$ connectivity. The work of \cite{KarchmerRW95} then proposed a strategy to separate the complexity classes $\mathbf{NC}^1$ and $\mathbf{P}$: start with a function $f$ on $\log n$ bits requiring depth $\Omega(\log n)$, then ``compose'' it with itself $k = \log n / \log\log n$ times (see \cite{KarchmerW90} for a precise definition of composition). If one could prove a strong enough direct sum theorem for communication complexity after composition, such a $k$-fold composition would yield a function that is provably in $\mathbf{P}$ (and in fact, even in $\mathbf{NC}^2$), but not in $\mathbf{NC}^1$. Proving such a direct sum theorem is still wide open, and the statement that it is true is known as the ``KRW conjecture''; see for example the recent works \cite{GavinskyMWW14,DinurM16} toward resolving this conjecture. As a toy problem en route to resolving it, \cite{KarchmerRW95} suggested proving a direct sum theorem for $k$-fold composition of a particular function $\ur$ that they defined; that task was positively resolved in \cite{EIRS91} (see also \cite{HastadW90}.

The problem $\ur$ abstracts away the function $f$ and requires Alice and Bob to agree on the index $i$ only knowing that $x,y\in\{0,1\}^n$ are unequal. The deterministic communication complexity of $\ur$ is nearly completely understood, with upper and lower bounds that match up to an additive $3$ bits, even if one requires an upper bound on the number of rounds \cite{TardosZ97}. Henceforth we also consider a generalized problem $\ur_k$, where the output must be $\min\{k, \|x-y\|_0\}$ distinct indices on which $x, y$ differ. We also use $\ur^{\subset}, \ur_k^{\subset}$ to denote the variants when promised $\supp(y)\subset \supp(x)$, and also Bob knows $\|x\|_0$. Clearly $\ur, \ur_k$ can only be harder than $\ur^\subset, \ur_k^\subset$, respectively.

More than twenty years after its initial introduction in connection with circuit depth lower bounds, Jowhari et al.\ in \cite{JowhariST11} demonstrated the relevance of $\ur$ in the randomized one-way communication model for obtaining space lower bounds for certain streaming problems, such as various sampling problems and finding duplicates in streams. In particular, if $\randcom^{\rightarrow,pub}_\delta(f)$ denotes the randomized one-way communication complexity of $f$ in the public coin model with failure probability $\delta$, \cite{JowhariST11} showed that the space complexity of \findup{n} with failure probability $\delta$ is at least $\randcom^{\rightarrow,pub}_{\frac 78 + \frac{\delta}8}(\ur)$. In \findup{n}, one is given a length-$(n+1)$ stream of integers in $[n]$, and the algorithm must output some element $i\in[n]$ which appeared at least twice in the stream (note that at least one such element must exist, by the pigeonhole principle). The work \cite{JowhariST11} then showed a reduction demonstrating that any solution to $\ell_0$-sampling with failure probability $\delta$ in turnstile streams immediately implies a solution to \findup{n} with failure probability at most $(1+\delta)/2$ in the same space (and thus the space must be at least $\randcom^{\rightarrow,pub}_{\frac{15}{16} + \frac{\delta}{16}}(\ur)$). The same result is shown for $\ell_p$-sampling for any $p>0$, in which the output index should equal $i$ with probability $|x_i|^p/(\sum_j |x_j|^p)$, and a similar result is shown even if the distribution on $i$ only has to be close to this $\ell_p$-distribution in variational distance (namely, the distance should be bounded away from $1$). It is then shown in \cite{JowhariST11} that $\randcom^{\rightarrow,pub}_\delta(\ur) = \Omega(\log^2 n)$ for any $\delta$ bounded away from $1$. The approach used though unfortunately does not provide an improved lower bound for $\delta\downarrow 0$.

Seemingly unnoticed in \cite{JowhariST11}, we first point out here that the lower bound proof for $\ur$ in that work actually proves the same lower bound for the promise problem $\ur^\subset$. This observation has several advantages. First, it makes the reductions to the streaming problems trivial (they were already quite simple when reducing from $\ur$, but now they are even simpler). Second, a simple reduction from $\ur^\subset$ to sampling problems provides space lower bounds even in the strict turnstile model, and even for the simpler {\em \suppfind{}} streaming problem for which when queried is allowed to return {\em any} element of $\supp(z)$, without any requirement on the distribution of the index output. Both of these differences are important for the meaningfulness of the lower bound. This is because in dynamic graph streaming applications, typically $z$ is indexed by $\binom{n}{2}$ for some graph on $n$ vertices, and $z_e$ is the number of copies of edge $e$ in some underlying multigraph. Edges then are never deleted unless they had previously been inserted, thus only requiring sampler subroutines that are correct with the strict turnstile promise. Also, for every single application mentioned in the first paragraph of Section~\ref{sec:intro} (except for the two applications in \cite{FrahlingIS08}), the known algorithmic solutions which we cited as using $\ell_0$-sampling as a subroutine actually only need a subroutine for the easier \suppfind{} problem. Finally, third and most relevant to our current work's main focus, the straightforward reductions from $\ur^\subset$ to the streaming problems we are considering here do not suffer any increase in failure probability, allowing us to transfer lower bounds on $\randcom^{\rightarrow,pub}_{\delta}(\ur^\subset)$ for small $\delta$ to lower bounds on various streaming problems for small $\delta$. The work \cite{JowhariST11} could not provide lower bounds for the streaming problems considered there in terms of $\delta$ for small $\delta$.

We now show simple reductions from $\ur^\subset$ to \findup{n} and from $\ur_k^\subset$ to \suppfind{k}. In \suppfind{k} we must report $\min\{k,\|z\|_0\}$ elements in $\supp(z)$. In the claims below, $\delta$ is the failure probability for the considered streaming problem.

\begin{claim}
Any one-pass streaming algorithm for \findup{n} must use $\randcom^{\rightarrow,pub}_{\delta}(\ur^\subset)$ space.
\end{claim}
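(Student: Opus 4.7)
The plan is to reduce $\ur^\subset$ to \findup{n}: given any one-pass streaming algorithm $\mathcal{A}$ for \findup{n} using $s$ bits of memory and failing with probability at most $\delta$, I would build a one-way public-coin protocol for $\ur^\subset$ of communication cost $s$ and matching failure probability, which yields the claimed lower bound.

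Given an instance $(x,y)$ of $\ur^\subset$, let $a := \|x\|_0$ (known to Bob by the promise). Using the shared public randomness to initialize $\mathcal{A}$'s internal coins, Alice feeds $\mathcal{A}$ the length-$a$ stream consisting of one insertion per $i \in \supp(x)$ and forwards the resulting $s$-bit memory state to Bob. Bob then feeds $\mathcal{A}$ the first $n+1-a$ elements of $[n]\setminus\supp(y)$ in some canonical order (well defined since $\|y\|_0 \le a-1$ gives $|[n]\setminus\supp(y)| \ge n+1-a$), queries $\mathcal{A}$, and outputs the returned index.

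For correctness, the combined stream has length exactly $n+1$ with entries in $[n]$, so it is a valid \findup{n} input. Writing $B \subseteq [n]\setminus\supp(y)$ for Bob's insertions, an index $j$ appears more than once iff $j \in \supp(x)\cap B$; since $B \subseteq [n]\setminus\supp(y)$ and $\supp(y) \subseteq \supp(x)$, this intersection is contained in $\supp(x)\setminus\supp(y) = \supp(x-y)$. Hence every duplicate produced is a valid $\ur^\subset$ answer, and the protocol inherits the failure probability $\delta$ of $\mathcal{A}$.

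There is essentially no obstacle; the only thing to arrange is matching \findup{n}'s length-$(n+1)$ input requirement, which is precisely why Bob needs his knowledge of $\|x\|_0$: he truncates his contribution to exactly $n+1-a$ elements of $[n]\setminus\supp(y)$, after which $|B| = n+1-a > n-a = |[n]\setminus\supp(x)|$ forces $B \cap \supp(x) \ne \emptyset$, so that the \findup{n} instance contains duplicates and every such duplicate lies in the target answer set.
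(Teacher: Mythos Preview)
Your proof is correct and follows essentially the same reduction as the paper: Alice streams $\supp(x)$ into $\mathcal{A}$ and sends the memory state, Bob streams $n+1-\|x\|_0$ elements of $[n]\setminus\supp(y)$ and queries. You give a slightly more detailed justification (e.g., verifying $|[n]\setminus\supp(y)|\ge n+1-a$ and that $B\cap\supp(x)\neq\emptyset$), but the argument is the same.
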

\begin{proof}
  We reduce from $\ur^\subset$. Suppose there were a space-$S$ algorithm $\mathcal{A}$ for \findup{n}. Alice creates a stream consisting of all elements of $\supp(x)$ and runs $\mathcal{A}$ on those elements, then sends the memory contents of $\mathcal{A}$ to Bob. Bob then continues running $\mathcal{A}$ on $n+1-\|x\|_0$ arbitrarily chosen elements of $[n]\backslash\supp(y)$. Then there must be a duplicate in the resulting concatenated stream, and all duplicates $i$ satisfy $x_i\neq y_i$.
\end{proof}

\begin{claim}
Any one-pass streaming algorithm for \suppfind{k} in the strict turnstile model must use $\randcom^{\rightarrow,pub}_{\delta}(\ur_k^\subset)$ bits of space, even if promised that $z\in\{0,1\}^n$ at all points in the stream.
\end{claim}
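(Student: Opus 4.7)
The plan is to mimic the previous claim's reduction, but design it so that the underlying vector $z$ stays in $\{0,1\}^n$ throughout the stream. Suppose $\mathcal{A}$ is a space-$S$ streaming algorithm for \suppfind{k} in the strict turnstile model with failure probability $\delta$. Given inputs $x,y\in\{0,1\}^n$ for $\ur_k^\subset$, with the promise $\supp(y)\subset\supp(x)$, Alice first feeds $\mathcal{A}$ the updates ``$z_i\leftarrow z_i+1$'' for each $i\in\supp(x)$ (in any order), and sends the resulting memory state of $\mathcal{A}$ to Bob. Bob continues the stream by feeding the updates ``$z_i\leftarrow z_i-1$'' for each $i\in\supp(y)$, which he can do using only his own input. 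He then issues a query to $\mathcal{A}$ and outputs whatever $\mathcal{A}$ returns.

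I would then verify two things. First, the underlying vector stays in $\{0,1\}^n$ at every prefix of the stream: during Alice's phase each coordinate is touched at most once with a $+1$, and during Bob's phase each coordinate of $\supp(y)\subset\supp(x)$ is decremented exactly once from $1$ to $0$, so $z$ is always $0/1$-valued and in particular the stream is strict turnstile. Second, at query time $z$ is the indicator vector of $\supp(x)\setminus\supp(y)$, which under the promise $\supp(y)\subset\supp(x)$ is exactly the set $\{i : x_i\neq y_i\}$; hence $\|z\|_0 = \|x-y\|_0$ and any $\min\{k,\|z\|_0\}$ distinct elements of $\supp(z)$ returned by $\mathcal{A}$ constitute a valid answer for $\ur_k^\subset$ on input $(x,y)$.

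Since $\mathcal{A}$ succeeds with probability at least $1-\delta$ using shared public randomness (the same randomness can be used by Alice and Bob in the communication protocol), the resulting one-way protocol for $\ur_k^\subset$ has communication at most $S$ and failure probability at most $\delta$. Therefore $S\ge \randcom^{\rightarrow,pub}_{\delta}(\ur_k^\subset)$, as claimed. There is no real obstacle here: the only thing to be careful about is that the reduction respects both the strict turnstile promise and the stronger $\{0,1\}^n$ promise, and this is exactly why Alice sends only $+1$ updates on $\supp(x)$ and Bob sends only $-1$ updates on $\supp(y)\subset\supp(x)$ rather than, say, negating Bob's coordinates directly.
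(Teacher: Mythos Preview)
Your proposal is correct and matches the paper's proof essentially verbatim: Alice sends $+1$ updates on $\supp(x)$, transmits the memory, Bob sends $-1$ updates on $\supp(y)$, and the resulting $z$ is the indicator of $\{i:x_i\neq y_i\}$. You have merely spelled out a few verification details (that $z\in\{0,1\}^n$ throughout and that public randomness transfers) that the paper leaves implicit.
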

\begin{proof}
This is again via reduction from $\ur_k^\subset$. Let $\mathcal{A}$ be a space-$S$ algorithm for \suppfind{k} in the strict turnstile model. For each $i\in\supp(x)$, Alice sends the update $z_i \leftarrow z_i + 1$ to $\mathcal{A}$. Alice then sends the memory contents of $\mathcal{A}$ to Bob. Bob then for each $i\in\supp(y)$ sends the update $z_i\leftarrow z_i - 1$ to $\mathcal{A}$. Now note that $z$ is exactly the indicator vector of the set $\{i : x_i\neq y_i\}$.
\end{proof}

\begin{claim}
Any one-pass streaming algorithm for $\ell_p$-sampling for any $p\ge 0$ in the strict turnstile model must use $\randcom^{\rightarrow,pub}_{\delta}(\ur_k^\subset)$ bits of space, even if promised $z\in\{0,1\}^n$ at all points in the stream.
\end{claim}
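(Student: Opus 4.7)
The plan is to reuse the reduction from the preceding claim almost verbatim, exploiting the elementary observation that on a $\{0,1\}$-valued vector the $\ell_p$-sampling distribution collapses to the uniform distribution on $\supp(z)$ for every $p\ge 0$ (since each nonzero $|z_i|^p$ equals $1$, so all surviving coordinates receive equal weight). Thus, interpreting $\ell_p$-sampling in the $k$-sample sense, an $\ell_p$-sampler on a Boolean vector is in particular a \suppfind{k}-style primitive, and the previous reduction carries over.

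Concretely, given a space-$S$ algorithm $\mathcal{A}$ for $\ell_p$-sampling in the strict turnstile model, Alice simulates $\mathcal{A}$ on the stream of updates $z_i\leftarrow z_i + 1$ for each $i\in\supp(x)$ and sends $\mathcal{A}$'s state to Bob; Bob then feeds $z_i \leftarrow z_i - 1$ for each $i \in \supp(y)$. Under the $\ur_k^\subset$ promise $\supp(y)\subseteq \supp(x)$, the coordinates of $z$ remain in $\{0,1\}$ throughout the stream, so both the strict turnstile promise and the stronger Boolean promise of the claim are respected. After all updates, $z$ is the indicator vector of $\{i : x_i\neq y_i\}$, and querying $\mathcal{A}$ yields $\min\{k,\|x-y\|_0\}$ distinct indices sampled (uniformly at random) from that set, which is exactly a valid $\ur_k^\subset$ answer. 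The failure probability transfers from $\mathcal{A}$ to the $\ur_k^\subset$ protocol without any loss.

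There is no substantive obstacle; the only thing worth verifying is that whatever distributional guarantee on the output one builds into the definition of $\ell_p$-sampling (exact, approximate, or merely $\mathop{support}$-only) is automatically met here because it reduces to the uniform case on a Boolean input. The one notable feature of the plan is that the same reduction handles every $p\ge 0$ simultaneously, so the lower bound applies uniformly across the $\ell_p$-sampling family (including $\ell_0$, $\ell_1$, $\ell_2$) in the strict turnstile model, even under the Boolean promise.
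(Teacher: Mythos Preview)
Your proposal is correct and takes essentially the same approach as the paper. The only cosmetic difference is that the paper factors through the preceding claim---it simply observes that any $\ell_p$-sampler is in particular a \suppfind{k} algorithm (any output satisfying distributional requirements is a fortiori in $\supp(z)$), and then invokes the prior $\ur_k^\subset \to$ \suppfind{k} reduction---whereas you compose the two steps into one direct reduction from $\ur_k^\subset$; your additional remark that the $\ell_p$ distribution collapses to uniform on Boolean vectors is true but not actually needed for the argument.
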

\begin{proof}
This is via straightforward reduction from \suppfind{k}, since reporting $\min\{k,\|z\|_0\}$ elements of $\supp(z)$ satisfying some distributional requirements is only a harder problem than finding {\em any} $\min\{k,\|z\|_0\}$ elements of $\supp(z)$.
\end{proof}

The reductions above thus raise the question: what is the asymptotic behavior of $\randcom^{\rightarrow,pub}_\delta(\ur_k^\subset)$?

\paragraph{Our main contribution:} We prove for any $\delta$ bounded away from $1$ and $k\in[n]$, $\randcom^{\rightarrow,pub}_\delta(\ur_k^\subset) = \Theta(\min\{n, t\log^2(n/t)\})$ where $t = \max\{k,\log(1/\delta)\}$. Given known upper bounds in \cite{JowhariST11}, our lower bounds are optimal for \findup{n}, \suppfind{}, and $\ell_p$-sampling for any $0\le p<2$ for nearly the full range of $n, \delta$ (namely, for $\delta > 2^{-n^{.99}}$). Also given an upper bound of \cite{JowhariST11}, our lower bound is optimal for $\ell_0$-sampling$_k$ for nearly the full range of parameters $n, k, \delta$ (namely, for $t < n^{.99}$). Previously no lower bounds were known in terms of $\delta$ (or $k$). Our main theorem:

\begin{theorem}\label{thm:main}
For any $\delta$ bounded away from $1$ and $1\le k\le n$, $\randcom^{\rightarrow,pub}_\delta(\ur_k^\subset) = \Theta(\min\{n, t\log^2(n/t)\})$.
\end{theorem}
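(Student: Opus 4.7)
The plan is to prove the lower bound via an encoding argument, using the protocol to compress a uniformly random subset $T\subset[n]$ of a carefully chosen size to significantly fewer bits than $\log\binom{n}{|T|}$. Since the matching upper bound is provided by Jowhari et al., and $\Omega(n)$ is unavoidable whenever $t\log^2(n/t)\ge n$ (Alice's input alone has entropy $n$ in the hard instances), it suffices to focus on $\randcom^{\rightarrow,pub}_{\delta}(\ur_k^{\subset}) = \Omega(t\log^2(n/t))$ in the opposite regime. Assuming a protocol of sketch length $S$ and failure probability $\delta$, I would set $m = \Theta(t\log(n/t))$, giving $\log\binom{n}{m} = \Theta(t\log^2(n/t))$, and exhibit an encoder that compresses a uniform $T\subset[n]$ with $|T|=m$ down to essentially $S$ bits, forcing $S = \Omega(t\log^2(n/t))$.

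The encoder, knowing $T$, sets $x=\mathbf{1}_T$ and runs Alice to obtain a sketch $\sigma$ of length $S$, which forms the payload. The decoder rebuilds $T$ by simulating Bob over $R = \Theta(\log(n/t))$ adaptive rounds: in round $j$, holding a set $Y_{j-1}\subset T$ of already-recovered elements, the decoder queries Bob on $y^{(j)}=\mathbf{1}_{Y_{j-1}}$ and (when Bob succeeds) receives $k$ fresh elements of $T\setminus Y_{j-1}$, which it appends to form $Y_j$. By design, after $R$ successful rounds $|Y_R|\ge m$ and $T$ is recovered. The encoder runs this same simulation privately to anticipate the queries and patch any failures into the encoding.

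The technical heart of the argument is the adaptivity: $y^{(j)}$ depends on Bob's previous outputs and hence on the protocol's random coins, so Bob's per-query correctness guarantee no longer composes directly; a naive union bound over $R$ rounds would require $\delta = O(1/R)$, erasing exactly the $\log(n/t)$ factor we are trying to prove. To circumvent this I would follow the abstract's outline and inject fresh, encoder-chosen random noise into each query (for example by perturbing $Y_{j-1}$ by a small random set before querying), calibrated so that each $y^{(j)}$ carries only $O(1)$ bits of mutual information with the protocol's internal randomness. A generalization-from-mutual-information argument in the spirit of Russo--Zou and the adaptive data analysis literature then ensures that all $R$ adaptive queries succeed simultaneously with constant probability, without paying a $\log R$ penalty. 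The noise itself is shared through the encoding, at an additional cost of only $O(R)=O(\log(n/t))$ bits.

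Putting the pieces together, the encoding has length $S+O(R)+O(1)$ and with constant probability decodes correctly to $T$, so $S+O(\log(n/t)) = \Omega(\log\binom{n}{m}) = \Omega(t\log^2(n/t))$, giving $S=\Omega(t\log^2(n/t))$. The step I expect to be the main obstacle is the adaptivity analysis: calibrating the noise distribution precisely so that the per-query mutual information is small enough to yield a composable correctness guarantee, while simultaneously keeping the noise small enough that Bob's responses remain informative about $T\setminus Y_{j-1}$. This trade-off, rather than the encoding bookkeeping, is where the $\log^2(n/t)$ factor is ultimately earned, and it is the reason the bound is significantly more delicate than the constant-$\delta$, $k=1$ case treated in prior work.
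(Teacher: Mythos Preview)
Your high-level framework (encoding argument, adaptive querying of Bob, noise injection to control the correlation between queries and the protocol's randomness) matches the paper, but the quantitative plan does not close, and you are missing a structural point: the paper proves the bound as \emph{two separate} lower bounds, $\Omega(\log(1/\delta)\log^2(n/\log(1/\delta)))$ for $\ur^\subset$ and $\Omega(k\log^2(n/k))$ for $\ur_k^\subset$ with constant $\delta$, using genuinely different mechanisms.

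Concretely, your round budget is wrong in the $\delta$-dominated regime. With $R=\Theta(\log(n/t))$ rounds and $k$ indices per round you recover at most $kR=k\log(n/t)$ elements, which is strictly less than $m=t\log(n/t)$ whenever $t=\log(1/\delta)>k$; the claim ``after $R$ successful rounds $|Y_R|\ge m$'' simply fails there. The paper's fix for this regime is not to get more out of each round but to run many more rounds: it uses only one sample per round, runs $R=\Theta(\log(1/\delta)\cdot\log(n/\log(1/\delta)))$ rounds, and calibrates the masking so the mutual information per query is $O(\log(1/\delta))$, \emph{not} $O(1)$. That calibration is exactly what buys the extra $\log(1/\delta)$ factor in $R$: with $K=\Theta(\log(1/\delta))$ you shrink the residual set by a $(1-1/K)$ factor per round, so you get $\Theta(K\log(m/K))$ rounds before it becomes too small, and Lemma~\ref{lem:information} with $I(X;Y)=O(K)$ still gives constant success probability per round. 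Your $O(1)$-mutual-information schedule would force halving each round, capping $R$ at $O(\log m)$ and losing the $\log(1/\delta)$ factor.

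Relatedly, the encoding does \emph{not} recover all of $T$. The encoder sends $B=T\setminus A$ explicitly (this set \emph{is} the noise reservoir), and the saving is $\log\binom{n}{m}-\log\binom{n}{|B|}\approx |A|\log(n/m)$; this is why the paper takes $m=\sqrt{nt}$ rather than $m=t\log(n/t)$, so that $\log(n/m)=\Theta(\log(n/t))$ while $m$ is large enough to sustain $R$ rounds of masking. Your accounting ``noise costs $O(R)$ bits'' is not how the cost shows up.

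Finally, for the $k$-dominated regime the paper does \emph{not} use the mutual-information machinery at all. It makes the queries non-adaptive: using shared randomness, it fixes in advance a nested sequence $T_0\supset T_1\supset\cdots\supset T_R$ by independent halving, and in round $r$ Bob is queried on $\mathbf{1}_{S\setminus(S\cap T_{r-1})}$. These inputs are independent of the protocol's internal randomness, so the per-round guarantee applies directly with no adaptivity penalty; the $k/2$ fresh elements harvested per round over $R=\Theta(\log(n/k))$ rounds give the $\Omega(k\log^2(n/k))$ saving. Trying to handle this case via noise injection, with $\log(1/\delta)=O(1)$ in the denominator of Lemma~\ref{lem:information}, is the hard way around.
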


Our upper bound is also new, though follows by minor modifications of the upper bound in \cite{JowhariST11} and thus we describe it in the appendix. The previous upper bound was $O(\min\{n, t\log^2 n\})$. We also mention here that it is known that the upper bound for both $\ur_k$ and $\ell_0$-sampling$_k$ in two rounds (respectively, two passes) is only $O(t\log n)$ \cite{JowhariST11}. Thus, one cannot hope to extend our new lower bound to two or more passes, since it simply is not true.

\subsection{Related work}
The question of whether $\ell_0$-sampling is possible in low memory in turnstile streams was first asked in \cite{CormodeMR05,FrahlingIS08}. The work \cite{FrahlingIS08} was applied $\ell_0$-sampling as a subroutine in approximating the cost of the Euclidean minimum spanning tree of a subset $S$ of a discrete geometric space subject to insertions and deletions. The algorithm given there used space $O(\log^3 n)$ bits to achieve failure probability $1/\mathop{poly}(n)$ (though it is likely that the space could be improved to $O(\log^2 n\log\log n)$ with a worse failure probability, by replacing a subroutine used there with a more recent $\ell_0$-estimation algorithm of \cite{KaneNW10}). As mentioned, the currently best known upper bound solves $\ell_0$-sampling$_k$ using $O(t\log^2 n)$ bits \cite{JowhariST11}, which Theorem~\ref{thm:main} shows is tight.

For $\ell_p$-sampling as defined above, the first work to realize its importance came even earlier than for $\ell_0$-sampling: \cite{CoppersmithK04} showed that an $\ell_2$-sampler using small memory would lead to a nearly space-optimal streaming algorithm for multiplicatively estimating $\|x\|_3$ in the turnstile model, but did not know how to implement such a data structure. The first implementation was given in \cite{MonemizadehW10}, where they achieved space $\mathop{poly}(\eps^{-1}\log n)$ for failure probability $1/\mathop{poly}(n)$. For $1\le p\le 2$ the space was improved to $O(\eps^{-p}\log^3 n)$ bits for constant failure probability \cite{AndoniKO11}. In \cite{JowhariST11} this bound was improved to $O(\eps^{-\max\{1,p\}}\log(1/\delta)\log^2 n)$ bits for failure probability $\delta$ when $0<p<2$ and $p\neq 1$. For $p=1$ the space bound achieved by \cite{JowhariST11} was a $\log(1/\eps)$ factor worse: $O(\eps^{-1}\log(1/\eps)\log(1/\delta)\log^2 n)$ bits.

For finding a duplicate item in a stream, the question of whether a space-efficient randomized algorithm exists was asked in \cite{Muthukrishnan05,Tarui07}. The question was positively resolved in \cite{GopalanR09}, which gave an $O(\log^3 n)$-space algorithm with constant failure probability. An improved algorithm was given in \cite{JowhariST11}, using $O(\log(1/\delta) \log^2 n)$ bits of space for failure probability $\delta$.

\section{Overview of techniques}\label{sec:overview}
We describe our proof of Theorem~\ref{thm:main}. For the upper bound, \cite{JowhariST11} achieved $O(t\log^2n)$, but in the appendix we show that slight modifications to their approach yield $O(\min\{n,t\log^2(n/t)\})$ bits. Our main contribution is in proving an improved lower bound. Assume $t < cn$ for some sufficiently small constant $c$ (since otherwise we already obtain an $\Omega(n)$ lower bound). Our lower bound proof in this regime is split into two parts: we show $\randcom^{\rightarrow,pub}_\delta(\ur^\subset) = \Omega(\log \frac 1{\delta}\log^2 \frac n{\log\frac 1{\delta}})$ and $\randcom^{\rightarrow,pub}_{\frac 12}(\ur_k^\subset)=\Omega(k\log^2\frac nk)$ separately. We give an overview the former here, which is the more technically challenging half. Our proof of the latter can be found in Section~\ref{sec:k-samples-lb}.

We prove the lower bound via an encoding argument. Fix $m$. A randomized encoder is given a set $S\subset[n]$ with $|S| = m$ and must output an encoding $\enc(S)$, and a decoder sharing public randomness with the encoder must be able to recover $S$ given only $\enc(S)$. We consider such schemes in which the decoder must succeed with probability $1$, and the encoding length is a random variable. Any such encoding must use $\Omega(\log(^n_m)) = \Omega(m\log \frac nm)$ bits in expectation for some $S$.

There is a natural, but sub-optimal approach to using a public-coin one-way protocol $\mathcal{P}$ for $\ur^\subset$ to devise such an encoding/decoding scheme.  The encoder pretends to be Alice with input $x$ being the indicator set of $S$, then lets $\enc(S)$ be the message $M$ Alice would have sent to Bob. The decoder attempts to recover $S$ by iteratively pretending to be Bob $m$ times, initially pretending to have input $y=0\in\{0,1\}^n$, then iteratively adding elements found in $S$ to $y$'s support. Henceforth let $\mathbf{1}_T\in\{0,1\}^n$ denote the indicator vector of a set $T\subset[n]$.

\begin{algorithm}[H] 
  \caption{Simple Decoder.} \label{algo:wrong}
  \begin{algorithmic}[1]
    \Procedure{$\dec$}{$M$}
    \State $T\leftarrow \emptyset$
    \For {$r=1,\ldots,m$} 
      \State Let $i$ be Bob's output upon receiving message $M$ from Alice when Bob's input is $\mathbf{1}_T$
      \State $T \leftarrow T \cup\{i\}$
    \EndFor
    \State \Return $T$
    \EndProcedure
  \end{algorithmic}
\end{algorithm}

One might hope to say that if the original failure probability were $\delta < 1/m$, then by a union bound, with constant probability every iteration succeeds in finding a new element of $S$ (or one could even first apply some error-correction to $x$ so that the decoder could recover $S$ even if only a constant fraction of iterations succeeded). The problem with such thinking though is that this decoder chooses $y$'s adaptively! To be specific, $\mathcal{P}$ being a correct protocol means
\begin{equation}
\forall x,y\in\{0,1\}^n,\ \Pr_s(\mathcal{P}\text{ is correct on inputs }x,y) \ge 1-\delta , \label{eqn:correct}
\end{equation}
where $s$ is the public random string that both Alice and Bob have access to. The issue is that even in the second iteration (when $r=2$), Bob's ``input'' $\mathbf{1}_T$ {\em depends on $s$}, since $T$ depends on the outcome of the first iteration! Thus the guarantee of \eqref{eqn:correct} does not apply.

One way around the above issue is to realize that as long as every iteration succeeds, $T$ is always a subset of $S$. Thus it suffices for the following event $\mathcal{E}$ to occur: $\forall T\subset S,\ \mathcal{P}\text{ is correct on inputs }\mathbf{1}_S, \mathbf{1}_T$. Then $\Pr_s(\neg \mathcal{E}) \le 2^m\delta$ by a union bound, which is at most $1/2$ for $m = \lfloor \log_2(1/\delta)\rfloor - 1$. We have thus just shown that $\randcom^{\rightarrow,pub}_\delta(\ur^\subset) = \Omega(\min\{n, \log(^n_m)\}) = \Omega(\min\{n, \log\frac 1{\delta}\log \frac n{\log(1/\delta)}\})$.

Our improvement is as follows. Our new decoder again iteratively tries to recover elements of $S$ as before. We will give up though on having $m$ iterations and hoping for all (or even most) of them to succeed. Instead, we will only have $R = \Theta(\log \frac 1{\delta}\log \frac n{\log \frac 1{\delta}})$ iterations, and our aim is for the decoder to succeed in finding a new element in $S$ for at least a constant fraction of these $R$ iterations. Simplifying things for a moment, let us pretend for now that all $R$ iterations do succeed in finding a new element. $\enc(S)$ will then be Alice's message $M$, together with the set $B\subset S$ of size $m-R$ not recovered during the $R$ rounds, explicitly written using $\lceil\log{n \choose |B|}\rceil$ bits. If the decoder can then recover these $R$ remaining elements, this then implies the decoder has recovered $S$, and thus we must have $|M| = \Omega(\log{n\choose m} - \log{n \choose |B|}) = \Omega(R\log \frac nm)$. The decoder proceeds as follows. Just as before, initially the decoder starts with $T = \emptyset$ and lets $i$ be the output of Bob on $\mathbf{1}_T$ and adds it to $T$. Then in iteration $r$, before proceeding to the next iteration, the decoder randomly picks some elements from $B$ and adds them into $T$, so that the number of elements left to be uncovered is some fixed number $n_r$. These extra elements being added to $T$ should be viewed as ``random noise'' to mask information about the random string $s$ used by $\mathcal{P}$, an idea very loosely inspired by ideas in differential privacy. For intuition, as an example suppose the iteration $r=1$ succeeds in finding some $i\in S$. If the decoder were then to add $i$ to $T$, as well as $\approx m/2$ random elements from $B$ to $T$, then the resulting $T$ reveals only $\approx 1$ bit of information about $i$ (and hence about $s$). This is as opposed to the $\log n$ bits $T$ would have revealed if the masking were not performed. Thus the next query in round $r=2$, although correlated with $s$, has very weak correlation after masking and we thus might hope for it to succeed. This intuition is captured in the following lemma, which we prove in Section~\ref{sec:optimal-lb}:
\begin{lemma}\label{lem:information}
  Consider $f$: $\{0,1\}^b\times \{0,1\}^q\rightarrow \{0,1\}$ and $X\in\{0,1\}^b$ uniformly random. If $\forall y\in \{0,1\}^q,\ \Pr(f(X,y)=1)\le \delta$ where $0<\delta<1$, then for any random variable $Y$ supported on $\{0,1\}^q$,
  \begin{align}
    \Pr(f(X,Y)=1)\le \frac{I(X;Y)+H_2(\delta)}{\log \frac{1}{\delta}}, \label{eqn:adaptivity}
  \end{align}
  where $I(X;Y)$ is the mutual information between $X$ and $Y$, and $H_2$ is the binary entropy function.
\end{lemma}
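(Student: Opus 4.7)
My plan is to argue via the data processing inequality for Kullback--Leibler divergence, comparing the joint law of $(X,Y)$ to the product of its marginals. Writing $P = P_{X,Y}$, $Q = P_X \otimes P_Y$, and $E = \{(x,y) : f(x,y) = 1\}$, one has $I(X;Y) = D(P \,\|\, Q)$. Set $p = P(E) = \Pr(f(X,Y)=1)$; then the hypothesis that $\Pr(f(X,y) = 1) \le \delta$ for every fixed $y$ immediately gives
\[
q \;:=\; Q(E) \;=\; \E_{Y' \sim P_Y}\bigl[\Pr(f(X,Y') = 1 \mid Y')\bigr] \;\le\; \delta,
\]
since under $Q$ the inner conditional probability reduces to $\Pr(f(X, y') = 1)$ at the realization $Y' = y'$, which is at most $\delta$ by assumption.

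The next step is to apply data processing to the binary partition $\{E, E^c\}$, yielding
\[
I(X;Y) \;\ge\; D\bigl(\mathrm{Bern}(p)\,\|\,\mathrm{Bern}(q)\bigr) \;=\; -H_2(p) - p\log q - (1-p)\log(1-q).
\]
Using $q \le \delta$ gives $-p\log q \ge p\log(1/\delta)$, and since $q \in [0,1)$ one has $-(1-p)\log(1-q) \ge 0$. Combining and rearranging yields $p\log(1/\delta) \le I(X;Y) + H_2(p)$, which already has the shape of the claimed bound.

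The main obstacle is replacing the $H_2(p)$ that emerges naturally with the $H_2(\delta)$ appearing in the statement. When $p \le \delta \le 1/2$, monotonicity of $H_2$ on $[0,1/2]$ makes the replacement immediate. For the remaining regime $p > \delta$, the natural data-processing route leaves the entropy term depending on $p$ rather than $\delta$; passing to the stated form likely requires either a sharper analysis (exploiting the stronger lower bound $D(\mathrm{Bern}(p)\,\|\,\mathrm{Bern}(q)) \ge D(\mathrm{Bern}(p)\,\|\,\mathrm{Bern}(\delta))$, valid whenever $q \le \delta \le p$, followed by regrouping via the identity $H_2(\delta) = \delta\log\tfrac{1}{\delta} + (1-\delta)\log\tfrac{1}{1-\delta}$) or the crude relaxation $H_2(p) \le 1$. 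The latter, though slightly weaker, suffices for the downstream asymptotic application in Section~\ref{sec:optimal-lb}, where the key quantity is the denominator $\log(1/\delta)$ and the numerator contributes only at the level of constants.
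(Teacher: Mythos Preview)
Your route via the data-processing inequality for KL divergence differs from the paper's, which instead upper-bounds $H(X\mid Y)$ by an explicit encoding of $X$ given $Y$: first transmit the bit $Z=f(X,Y)$, then either $X$ itself (if $Z=0$, cost $\le b$) or the index of $X$ within $\{x: f(x,Y)=1\}$ (if $Z=1$, cost $\le b-\log\tfrac1\delta$). Both arguments, however, land on the same inequality you obtained,
\[
p\,\log\tfrac1\delta \;\le\; I(X;Y) + H_2(p),\qquad p=\Pr(f(X,Y)=1),
\]
since in the encoding argument the header bit honestly costs $H(Z\mid Y)\le H_2(p)$ in expectation, not $H_2(\delta)$ as the paper asserts in passing. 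Your derivation is a bit cleaner and does not use uniformity of $X$; the paper's encoding picture makes the kinship with Fano's inequality more visible.

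You are right to isolate the $H_2(p)\to H_2(\delta)$ replacement as the sticking point --- it is exactly the step the paper's proof glosses over. Your proposed sharpening via $D(\mathrm{Bern}(p)\,\|\,\mathrm{Bern}(\delta))$ does not rescue it: one checks that $D(p\,\|\,\delta)\ge p\log\tfrac1\delta - H_2(\delta)$ can fail (take small $\delta$ and $p$ near $\tfrac12$), and in fact concrete choices of $f$ and $Y$ violate the lemma as literally stated with $H_2(\delta)$. Your fallback $H_2(p)\le 1$ is precisely what the paper actually uses downstream: in the proof of the main theorem the lemma is invoked only as $\Pr(b_r=0)\le\frac{I(X;S_{r-1})+1}{\log(1/\delta)}$, so the version you have fully established already suffices for the application.
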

Fix some $x\in\{0,1\}^n$. One should imagine here that $f(X,y)$ is $1$ iff $\mathcal{P}$ fails when Alice has input $x$ and Bob has input $y$ in a $\ur^\subset$ instance, and the public random string is $X=s$. Then the lemma states that if $y=Y$ is not arbitrary, but rather random (and correlated with $X$), then the failure probability of the protocol is still bounded as long as the mutual information between $X$ and $Y$ is bounded. It is also not hard to see that this lemma is sharp up to small additive terms. Consider the case $x,y\in[n]$, and $f(x,y) = 1$ iff $x = y$. Then if $X$ is uniform, for all $y$ we have $\Pr(f(X,y) = 1) = 1/n$. Now consider the case where $Y$ is random and equal to $X$ with probability $t/\log n$ and is uniform in $[n]$ with probability $1 - t/\log n$. Then in expectation $Y$ reveals $t$ bits of $X$, so that $I(X;Y) = t$. It is also not hard to see that $\Pr(f(X,Y) = 1) \approx t/\log n + 1/n$.

In light of the strategy stated so far and Lemma~\ref{lem:information}, the path forward is clear: at each iteration $r$, we should add enough random masking elements to $T$ to keep the mutual information between $T$ and all previously added elements below, say, $\frac 12 \log \frac 1{\delta}$. Then we expect a constant fraction of iterations to succeed. The encoder knows which iterations do not succeed since it shares public randomness with the decoder (and can thus simulate it), so it can simply tell the decoder which rounds are the failed ones, then explicitly include in $M$ correct new elements of $S$ for the decoder to use in the place of Bob's wrong output in those rounds. A calculation shows that if one adds a $(1-1/K)\approx 2^{-1/K}$ fraction of the remaining items in $S$ to $T$ after drawing one more support element from Bob, the mutual information between the next query to Bob and the randomness used by $\mathcal{P}$ will be $O(K)$ (see Lemma~\ref{lemma:mutual-entropy-bound}). Thus we do this for $K$ a sufficiently small constant times $\log \frac 1{\delta}$. We will then have $n_r \approx (1 - 1/K)^r m$. Note that we cannot continue in this way once $n_r < K$ (since the number of ``random noise'' elements we inject should at least be one). Thus we are forced to stop after $R = \Theta(K\log(m/K)) = \Theta(\log\frac 1{\delta} \log\frac n{\log \frac 1{\delta}})$ iterations. We then set $m = \sqrt{n\log(1/\delta)}$, so that $\randcom^{\rightarrow,pub}_\delta(\ur^\subset) = \Omega(|R|\log \frac nm) = \Omega(\min\{n, \log\frac 1{\delta}\log^2 \frac n{\log \frac 1{\delta}}\})$ as desired.

The argument for lower bounding $\randcom^{\rightarrow,pub}_\delta(\ur_k^\subset)$ is a bit simpler, and in particular does not need rely on Lemma~\ref{lem:information}. Both the idea and rigorous argument can be found in Section~\ref{sec:k-samples-lb}, but again the idea is to use a protocol for this problem to encode appropriately sized subsets of $[n]$.
 
As mentioned above, our lower bounds use protocols for $\ur^\subset$ and $\ur^\subset_k$ to establish protocols for encoding subsets of some fixed size $m$ of $[n]$. These encoders always consist of some message $M$ Alice would have sent in a $\ur^\subset$ or $\ur^\subset_k$ protocol, together with a random subset $B\subset S$ (using $\lceil \log_2|B|\rceil + \lceil\log{n\choose |B|}\rceil$ bits, to represent both $|B|$ and the set $B$ itself). Here $|B|$ is a random variable. These encoders are thus {\em Las Vegas}: the length of the encoding is a random variable, but the encoder/decoder always succeed in compressing and recovering the subset. The final lower bounds then come from the following simple lemma, which follows from the source coding theorem. 

\begin{lemma} \label{lemma:lb-meta}
  Let $\s$ denote the number of bits used by the $\ur^\subset$ or $\ur^\subset_k$ protocol, and let $\s'$ denote the expected number of bits to represent $B$. Then $(1+\s+\s') \ge \log (^n_m)$. In particular, $s \ge \log(^n_m) - s' - 1$.
\end{lemma}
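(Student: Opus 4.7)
The plan is to draw $S$ uniformly at random from $\binom{[n]}{m}$, so that $H(S)=\log\binom{n}{m}$, and then invoke Shannon's source coding theorem. First I would condition on an arbitrary realization of the public random string shared by encoder and decoder. Since the scheme is Las Vegas (the decoder recovers $S$ with probability one), conditional on that string the map $S \mapsto \enc(S)$ is an injective function, and its range consists of self-delimiting bit strings: Alice's message $M$ has the protocol's fixed length bound $\s$, and $B$ is appended via a prefix-free encoding of expected length $\s'$. So conditional on the public randomness, $\enc$ is a uniquely decodable code for $S$.

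Next I would apply Shannon's bound (via Kraft's inequality): for any uniquely decodable code over a source with entropy $H$, the expected codeword length is at least $H$. Because the public randomness is independent of $S$, we have $H(S\mid\text{public randomness})=H(S)=\log\binom{n}{m}$, and the pointwise bound therefore persists after averaging over the public coins. On the other hand, by construction the expected length of $\enc(S)$ is at most $\s+\s'+1$: the $\s$ bounds $|M|$, the $\s'$ bounds the expected length of the description of $B$, and the additive $1$ absorbs the at-most-one-bit overhead of rounding non-integer expected lengths up to integers (and/or of concatenating the fixed-length $M$ with the self-delimiting description of $B$). Combining the two inequalities gives $1+\s+\s' \ge \log\binom{n}{m}$, which rearranges to the stated $\s \ge \log\binom{n}{m}-\s'-1$.

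The only subtlety — and it is genuinely minor, not a real obstacle — is the randomized nature of the encoding scheme: one must reduce it to a deterministic uniquely decodable code by conditioning on the public coins, and verify that averaging over those coins preserves Shannon's lower bound. Both reductions follow from standard facts about conditional entropy and linearity of expectation, so nothing beyond bookkeeping is required.
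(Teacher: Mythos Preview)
Your proposal is correct and is essentially the approach the paper takes: the paper simply states that the lemma ``follows from the source coding theorem'' without spelling out details, and your write-up fills in precisely those details (draw $S$ uniformly, condition on the public coins so the Las Vegas encoder becomes a deterministic uniquely decodable code, apply Shannon/Kraft, then average back). The only cosmetic point is that in the paper the ``$B$'' part of the encoding also carries the auxiliary bit-string $b$ and the size $|B|$, so $\s'$ covers all of the non-$M$ overhead; your argument goes through unchanged once you fold those fixed-length fields into $\s'$.
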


Section~\ref{sec:optimal-lb} provides the full details of the proof that $\randcom^{\rightarrow,pub}_\delta(\ur^\subset) = \Omega(\min\{n, \log^2(\frac n{\log(1/\delta)}) \log \frac{1}{\delta}\})$. We extend our results in Section~\ref{sec:k-samples-lb} to $\ur_k^\subset$ for $k\ge 1$, proving a lower bound of $\Omega(k\log^2(n/k))$ communication even for constant failure probability.

\section{Communication Lower Bound for $\ur^\subset$} \label{sec:optimal-lb}

Consider a protocol $\mathcal{P}$ for $\ur^\subset$ with failure probability $\delta$, operating in the one-way public coin model. When Alice's input is $x$ and Bob's input is $y$, Alice sends $\sketch(x)$ to Bob, and Bob outputs $\query(\sketch(x), y)$, which with probability at least $1-\delta$ is in $\supp(x-y)$. As mentioned in Section~\ref{sec:overview}, we use $\mathcal{P}$ as a subroutine in a scheme for encoding/decoding elements of $\binom{[n]}m$ for $m = \lfloor \sqrt{n\log(1/\delta)}\rfloor$. In this section, we assume $\log \frac 1{\delta} \le n/64$, since for larger $n$ we have an $\Omega(n)$ lower bound.

\subsection{Encoding/decoding scheme}
We now describe our encoding/decoding scheme $(\enc, \dec)$ for elements in ${[n] \choose m}$, which uses $\mathcal{P}$ in a black-box way. The parameters shared by $\enc$ and $\dec$ are given in Algorithm~\ref{algo:para}.

\begin{algorithm}[H] 
  \caption{Variables Shared by encoder $\enc$ and decoder $\dec$.} \label{algo:para}
  \begin{algorithmic}[1] 
    \State $m\leftarrow \lfloor \sqrt{n \log\frac{1}{\delta}} \rfloor$ 
    \State $K\leftarrow \lfloor \frac{1}{16}\log \frac{1}{\delta} \rfloor$
    \State $R\leftarrow \lfloor K\log(m/4K) \rfloor$
    \For {$r = 0, \ldots, R$}
      \State $n_r\leftarrow \lfloor m \cdot 2^{-\frac{r}{K}} \rfloor$ \Comment{$|S_r|=n_r$, and we have $n_r-n_{r+1}\ge 2$}
    \EndFor
    \State Let $\pi$ be a random permutation on $[n]$ \Comment{Used to generate $S_r$ and $C_r$}
  \end{algorithmic}
\end{algorithm}

As discussed in Section~\ref{sec:overview}, on input $S\in {[n] \choose m}$, $\enc$ computes $M \leftarrow \sketch(\mathbf{1}_S)$ as part of its output. Moreover, $\enc$ also outputs a subset $B\subseteq S$ computed as follows. Initially $B=S$ and $S_0=S$. $\enc$ proceeds in $R$ rounds.  In round $r\in[R]$, $\enc$ computes $s_r\leftarrow \query(M, \mathbf{1}_{S\backslash S_{r-1}})$.  Let $b$ denote a binary string of length $R$, where $b_r$ records whether $\query$ succeeds in round $r$.  $\enc$ also outputs $b$.  If $s_r\in S_{r-1}$, i.e. $\query(M, \mathbf{1}_{S\backslash S_{r-1}})$ succeeds, $\enc$ sets $b_r=1$ and removes $s_r$ from $B$ (since the decoder can recover $s_r$ from the $\ur^\subset$-protocol, $\enc$ does not need to include it in $B$); otherwise $\enc$ sets $b_r=0$.  At the end of round $r$, $\enc$ picks a uniformly random set $S_r$ in $\binom{S_{r-1}\backslash \{s_r\}}{n_r}$.  In particular, $\enc$ uses its shared randomness with $\dec$ to generate $S_r$ in such a way that $\enc, \dec$ agree on the sets $S_r$ ($\dec$ will actually iteratively construct $C_r = S\backslash S_r$). We present $\enc$ in Algorithm~\ref{algo:enc}.

The decoding process is symmetric.  Let $C_0=\emptyset$ and $A=\emptyset$.  $\dec$ proceeds in $R$ rounds.  On round $r\in[R]$, $\dec$ obtains $s_r\in S\backslash C_{r-1}$ by invoking $\query(M, \mathbf{1}_{C_{r-1}})$.  By construction of $C_{r-1}$ (to be described later), it is guaranteed that $S_{r-1}=S\backslash C_{r-1}$.  Therefore, $\dec$ recovers exactly the same $s_r$ as $\enc$.  $\dec$ initially assigns $C_r\leftarrow C_{r-1}$.  If $b_r=1$, $\dec$ adds $s_r$ to both $A$ and $C_r$.  At the end of round $r$, $\dec$ inserts many random items from $B$ into $C_r$ so that $C_r=S\backslash S_r$.  $\dec$ can achieve this because of the shared random permutation $\pi$ when constructing $S_r$.  In the end, $\dec$ outputs $B\cup A$.  We present $\dec$ in Algorithm~\ref{algo:dec}.

\begin{algorithm}[H] 
  \caption{Encoder $\enc$.} \label{algo:enc}
  \begin{algorithmic}[1]
    \Procedure{$\enc$}{$S$}
    \State $M \leftarrow \sketch(\mathbf{1}_S)$
    \State $A\leftarrow \emptyset$
    \State $S_0 \leftarrow S$
    \For {$r=1,\ldots,R$}
      \State $s_r\leftarrow \query(M, \mathbf{1}_{S\backslash S_{r-1}})$
      \State $S_r\leftarrow S_{r-1}$
      \If {$s_r\in S_{r-1}$} \Comment{i.e. if $s_r$ is a valid sample}
        \State $b_r\leftarrow 1$ \Comment{$b$ is a binary string of length $R$, indicating if $\query$ succeeds on round $r$}
        \State $A\leftarrow A \cup \{s_r\}$
        \State $S_r\leftarrow S_r \backslash \{s_r\}$
      \Else 
        \State $b_r\leftarrow 0$
      \EndIf
      \State Remove $|S_r|-n_r$ elements from $S_r$ with smallest $\pi_a$'s among $a\in S_r$ \Comment{So that $|S_r|=n_r$}
    \EndFor
    \State \Return ($M$, $S\backslash A$, $b$) 
    \EndProcedure
  \end{algorithmic}
\end{algorithm}

\begin{algorithm}[H] 
  \caption{Decoder $\dec$.} \label{algo:dec}
  \begin{algorithmic}[1]
    \Procedure{$\dec$}{$M$, $B$, $b$}
    \State $A\leftarrow \emptyset$
    \State $C_0 \leftarrow \emptyset$
    \For {$r=1,\ldots,R$}
      \State $C_r\leftarrow C_{r-1}$
      \If{$b_r=1$}
        \State $s_r\leftarrow \query(M, \mathbf{1}_{C_{r-1}})$ \Comment{Invariant: $C_r=S \backslash S_r$ ($S_r$ is defined in $\enc$)}
        \State $A\leftarrow A \cup \{s_r\}$
        \State $C_r\leftarrow C_r \cup \{s_r\}$
      \EndIf
       \State Insert $m-n_r-|C_r|$ items into $C_r$ with smallest $\pi_a$'s among $a\in B\backslash C_r$
    \EndFor
    \State \Return $B\cup A$ 
    \EndProcedure
  \end{algorithmic}
\end{algorithm}

\subsection{Analysis}

We have two random objects in our encoding/decoding scheme: (1) the random source used by $\mathcal{P}$, denoted by $X$, and (2) the random permutation $\pi$. These are independent.

First, we can prove that $\dec(\enc(S))=S$.  That is, for any fixing of the randomness in $X$ and $\pi$, $\dec$ will always decode $S$ successfully.  It is because $\enc$ and $\dec$ share $X$ and $\pi$, so that $\dec$ essentially simulates $\enc$.  We formally prove this by induction in Lemma~\ref{lemma:zero-fail-prob}.

Now our goal is to prove that by using the $\ur^\subset$-protocol, the number of bits that $\enc$ saves in expectation over the naive $\lceil\log(^n_m)\rceil$-bit encoding is $\Omega(\log \frac{1}{\delta}\log^2 \frac{n}{\log (1/\delta)} )$ bits.  Intuitively, it is equivalent to prove the number of elements that $\enc$ saves is $\Omega(\log \frac{1}{\delta}\log \frac{n}{\log (1/\delta)} )$.
We formalize this in Lemma~\ref{lemma:bits-saving}. 
Note that $\enc$ also needs to output $b$ (i.e., whether the $\query$ succeeds on $R$ rounds), which takes $R$ bits. 
By our setting of parameters, we can afford the loss of $R$ bits.  Thus it is sufficient to prove $\E|B|=|S|-\Omega(\log \frac{1}{\delta}\log \frac{n}{\log (1/\delta)})$. 

We have $|S|-|B|=\sum_{r=1}^{R}b_r$. 
In Lemma~\ref{lem:information}, we prove the probability that $\query$ fails on round $r$ is upper bounded by $\frac{I(X;S_{r-1})+1}{\log \frac{1}{\delta}}$, where $I(X;S_{r-1})$ is the mutual information between $X$ and $S_{r-1}$. 
Furthermore, we will show in Lemma~\ref{lemma:mutual-entropy-bound} that $I(X;S_{r-1})$ is upper bounded by $O(K)$.
By our setting of parameters, we have $\E b_r=\Omega(1)$ and thus $\E(|S|-|B|)=\Omega(R)=\Omega(\log \frac{1}{\delta}\log \frac{n}{\log (1/\delta)})$.
 
\begin{lemma}\label{lemma:zero-fail-prob}
  $\dec(\enc(S))=S$.
\end{lemma}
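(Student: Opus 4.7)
The plan is to establish, by induction on $r\in\{0,1,\ldots,R\}$, the invariant $C_r = S\setminus S_r$, where $S_r$ is the set at the end of round $r$ of $\enc$ and $C_r$ is the corresponding set in $\dec$. Once this invariant holds at $r = R$, the lemma follows easily: under the invariant, the query inputs $\mathbf{1}_{S\setminus S_{r-1}}$ and $\mathbf{1}_{C_{r-1}}$ that $\enc$ and $\dec$ hand to $\query$ agree in every round, so because the random string $X$ is shared, the two procedures compute identical $s_r$'s and maintain identical sets $A$. Since $\enc$ defines $B = S\setminus A_R$ (with $A_R$ the final value of $A$), $\dec$'s output $B\cup A$ equals $S$.

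The base case $r=0$ follows from $S_0 = S$ and $C_0 = \emptyset$. For the inductive step, assume $C_{r-1} = S\setminus S_{r-1}$. The shared-randomness observation shows $s_r$ agrees in both procedures, so after the conditional update (adding $s_r$ to $A$ and to $C_{r-1}$, or removing it from $S_{r-1}$, exactly when $b_r=1$) but \emph{before} the $\pi$-based masking, the intermediate sets $S_r' := S_{r-1}\setminus\{s_r:b_r=1\}$ and $C_r' := C_{r-1}\cup\{s_r:b_r=1\}$ still satisfy $C_r' = S\setminus S_r'$. It then remains to verify that the $\pi$-masking restores the invariant: $\enc$ removes the $k := |S_r'| - n_r$ elements of smallest $\pi$-value from $S_r'$, while $\dec$ inserts the smallest-$\pi$ elements from $B\setminus C_r'$ until $|C_r| = m - n_r$. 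From $|C_r'| = m - |S_r'|$ the number of elements inserted equals $m - n_r - |C_r'| = |S_r'| - n_r = k$, so the cardinalities match.

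The main obstacle, and the only substantive step, is to show that these two smallest-$\pi$ subsets are in fact identical. Using $C_r' = S\setminus S_r'$ and $B = S\setminus A_R$, one obtains $B\setminus C_r' = S_r'\setminus A_R \subseteq S_r'$. The key combinatorial claim is that $A_R \cap S_r' \subseteq S_r$: for each \emph{past} successful sample ($j\le r$, $b_j=1$) we have $s_j\notin S_j\supseteq S_r'$, so $s_j\notin S_r'$ at all; for each \emph{future} successful sample ($j>r$, $b_j=1$) we have $s_j\in S_{j-1}\subseteq S_r$. Hence every element of $S_r'$ that is missing from $B\setminus C_r'$ already lies in $S_r$, and therefore has $\pi$-value strictly larger than every element of $S_r'\setminus S_r$ (by definition, $S_r'\setminus S_r$ is precisely the $k$ elements of $S_r'$ with smallest $\pi$-value). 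Consequently the $k$ smallest-$\pi$ elements of $B\setminus C_r'$ coincide with those of $S_r'$, so $\dec$ inserts exactly the set $\enc$ removes, giving $C_r = C_r' \cup (S_r'\setminus S_r) = S\setminus S_r$ and closing the induction.
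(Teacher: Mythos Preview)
Your proof is correct and follows essentially the same inductive approach as the paper, establishing the invariant $C_r = S\setminus S_r$ and using shared randomness to align the $s_r$'s; the only stylistic difference is that the paper verifies the $\pi$-masking step element by element (an inner loop showing each removed $a^*$ equals the corresponding inserted element), whereas you handle the whole batch at once via the global set $A_R$. One minor slip: in the chain ``$s_j\notin S_j\supseteq S_r'$'' for past samples $j\le r$, the containment $S_j\supseteq S_r'$ fails at $j=r$ (there $S_r\subseteq S_r'$, not the reverse), but the needed conclusion $s_r\notin S_r'$ holds directly from your definition of $S_r'$, so the argument is unaffected.
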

\begin{proof}
  We claim that for $r=0,\ldots, R$, $\{S_r, C_r\}$ is a partition of $S$ ($S_r$ is defined in Algorithm~\ref{algo:enc}, and $C_r$ in Algorithm~\ref{algo:dec}). We prove the claim by induction on $r$. Our base case is $r=0$, for which the claim holds since $S_0 = S$, $C_0 = \emptyset$.
  
  Assume the claim holds for $r-1$ ($1\le r \le R$), and we consider round $r$.  On round $r$, by induction $S\backslash S_{r-1}=C_{r-1}$, the index $s_r$ obtained by both \enc and \dec are the same.  Initially $S_r=S_{r-1}$ and $C_r=C_{r-1}$, and so $\{S_r,C_r\}$ is a partition of $S$.  If $s_r$ is a valid sample (i.e. $s_r\in S_{r-1}$), then $b_r=1$, and \enc removes $s_r$ from $S_r$ and in the meanwhile \dec inserts $s_r$ into $C_r$, so that $\{S_r, C_r\}$ remains a partition of $S$. Next, \enc repeats removing the $a$ from $S_r$ with the smallest $\pi_a$ value until $|S_r|=n_r$. Symmetrically, \dec repeats inserting the $a$ into $C_r$ with the smallest $\pi_a$ value among $a\in B\backslash C_r$, until $|C_r|=|S|-n_r$. In the end we have $|S_r|+|C_r|=|S|$, so \enc and \dec execute repetition the same number of times.  Moreover, we can prove that during the same iteration of this repeated insertion, the element removed from $S_r$ is exactly the same element inserted to $C_r$.  This is because in the beginning of a repetition $\{S_r, C_r\}$ is a partition of $S$.  We have $B\backslash C_r\subseteq S\backslash C_r=S_r$. Let $a^*$ denote $a\in S_r$ that minimizes $\pi_a$.  Then $a^*\in B\backslash C_r\subseteq S_r$ (since $a^*$ will be removed from $S_r$, it has no chance to be included in $S$ in \enc, so that $B$ contains $a^*$), and $\pi_{a^*}$ is also the smallest among $\{\pi_a : a\in B\backslash C_r\}$.  Thus both $\enc$ and $\dec$ will take $a^{*}$ (for \enc, to remove from $S_r$, and for \dec, to insert into $C_r$).  Therefore, $\{S_r, C_r\}$ remains a partition of $S$.
  
  Given the fact that $\{S_r, C_r\}$ is a partition of $S$, the $s_r$ are the same in \enc and \dec.  Furthermore, $A=\{s_r : b_r=1,r=1,\ldots, R\}$ are the same in \enc and \dec.  We know $A\subseteq S$.  Since \enc outputs $S\backslash A$, and \dec outputs $(S\backslash A)\cup A$, we have $\dec(\enc(S))=S$.
\end{proof}

\begin{lemma} \label{lemma:bits-saving}
Let $W\in \mathbb{N}$ be a random variable with $W\le m$ and $\E W\le m-d$. Then $\E(\log {n \choose m}-\log {n \choose W})\ge d \log (\frac{n}{m}-1)$.
\end{lemma}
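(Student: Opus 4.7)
The plan is a direct calculation: expand the ratio $\binom{n}{m}/\binom{n}{W}$ as a telescoping product and bound each factor from below uniformly.

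First I would observe that for $W \le m$, the identity $\binom{n}{j+1} = \binom{n}{j}\cdot\frac{n-j}{j+1}$ gives
\[
  \log\binom{n}{m} - \log\binom{n}{W} \;=\; \sum_{j=W+1}^{m} \log\frac{n-j+1}{j}.
\]
Since $W$ takes nonnegative integer values bounded by $m$, each summand is well-defined. The function $j \mapsto \frac{n-j+1}{j}$ is strictly decreasing on the range $[1,m]$, so every term in the sum is at least $\log\frac{n-m+1}{m}$. Using $\frac{n-m+1}{m} \ge \frac{n-m}{m} = \frac{n}{m}-1$, each summand is at least $\log(\frac{n}{m}-1)$. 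Therefore, pointwise (for every outcome of $W$),
\[
  \log\binom{n}{m} - \log\binom{n}{W} \;\ge\; (m-W)\log\!\left(\tfrac{n}{m}-1\right).
\]

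Now I would take expectations on both sides. The right-hand side is a nonnegative multiple of $(m-W)$ (nonnegative because the intended application has $m < n/2$, which is the only regime where the lemma is meaningful and where $\log(\frac{n}{m}-1) \ge 0$), so linearity of expectation plus the hypothesis $\E W \le m-d$ yields
\[
  \E\!\left[\log\binom{n}{m} - \log\binom{n}{W}\right] \;\ge\; \E[m-W]\cdot \log\!\left(\tfrac{n}{m}-1\right) \;\ge\; d\log\!\left(\tfrac{n}{m}-1\right),
\]
which is the claim.

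There is really no main obstacle here; the only point to be careful about is ensuring the per-outcome inequality has the correct sign before taking expectations, which is why I would first establish it pointwise for every value of $W \le m$ and only then invoke linearity. No concavity or Jensen-type argument is needed, so the randomness of $W$ enters the proof only through the final expectation step.
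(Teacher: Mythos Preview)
Your proposal is correct and takes essentially the same approach as the paper: expand $\binom{n}{m}/\binom{n}{W}$ as a product of $m-W$ ratios, bound each ratio below by $\frac{n-m}{m}$, obtain the pointwise inequality $\log\binom{n}{m}-\log\binom{n}{W}\ge (m-W)\log(\frac{n}{m}-1)$, and take expectations. The only cosmetic difference is the pairing of factors---the paper writes the product as $\prod_{i=1}^{m-W}\frac{n-W-i+1}{m-i+1}$ while you write it as $\prod_{j=W+1}^{m}\frac{n-j+1}{j}$---but both are the same product and both reduce to the same per-term lower bound $\frac{n-m}{m}$.
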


\begin{proof}
  \begin{align*}
  \log {n \choose m}-\log {n \choose W}
  &= \log \frac{n!/(m!(n-m)!)}{n!/(W!(n-W)!)} \\
  &= \sum_{i=1}^{m-W}\log \frac{n-W-i+1}{m-i+1} \\
  &\ge (m-W)\cdot \log \frac{n-W}{m} \\
  &\ge (m-W)\cdot \log \frac{n-m}{m}
  \end{align*}
  
  Taking expectation on both sides, we have $\E(\log {n \choose m}-\log {n \choose W})\ge d \log (\frac{n}{m}-1)$. 
\end{proof}

\noindent \textbf{Lemma~\ref{lem:information} (restated).}
  Consider $f$: $\{0,1\}^b\times \{0,1\}^q\rightarrow \{0,1\}$ and $X\in\{0,1\}^b$ uniformly random. If $\forall y\in \{0,1\}^q,\ \Pr(f(X,y)=1)\le \delta$ where $0<\delta<1$, then for any r.v.\ $Y$ supported on $\{0,1\}^q$,
$$
  \Pr(f(X,Y)=1)\le \frac{I(X;Y)+H_2(\delta)}{\log \frac{1}{\delta}} ,
$$
  where $I(X;Y)$ is the mutual information between $X$ and $Y$, and $H_2$ is the binary entropy function.
\begin{proof}
  It is equivalent to prove 
$$I(X;Y)\ge \E(f(X,Y))\cdot \log\frac{1}{\delta}-H_2(\delta).$$
By definition of mutual entropy $I(X;Y)=H(X)-H(X|Y)$, where $H(X)=b$ and we must show
$$H(X|Y)\le H_2(\delta)+(1-\E(f(X,Y)))\cdot b+\E(f(X,Y))\cdot (b-\log\frac{1}{\delta})=b+H_2(\delta)-\E(f(X,Y))\cdot \log\frac{1}{\delta} .$$
  The upper bound for $H(X|Y)$ is obtained by considering the following one-way communication problem: Alice knows both $X$ and $Y$ while Bob only knows $Y$, and Alice must send a single message to Bob so that Bob can recover $X$. The expected message length in an optimal protocol is exactly $H(X|Y)$.  Thus, any protocol gives an upper bound for $H(X|Y)$, and we simply take the following protocol: Alice prepends a $1$ bit to her message iff $f(X,Y) = 1$ (taking $H_2(\delta)$ bits in expectation). Then if $f(X,Y)=0$, Alice sends $X$ directly (taking $b$ bits). Otherwise, when $f(X,Y)=1$, Alice sends the index of $X$ in $\{x|f(x,Y)=1\}$ (taking $\log (\delta 2^b)=b-\log\frac{1}{\delta}$ bits).  
\end{proof}

\begin{corollary}\label{corollary:sampler-failure}
  Let $X$ denote the random source used by the $\ur^\subset$-protocol with failure probability at most $\delta$. If $S$ is a fixed set and $T\subset S$, $\Pr(\query(\sketch(\mathbf{1}_S), \mathbf{1}_T)\not\in S\backslash T)\le \frac{I(X;T)+H_2(\delta)}{\log\frac{1}{\delta}}$.
\end{corollary}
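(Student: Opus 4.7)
The plan is to obtain the corollary as an immediate specialization of Lemma~\ref{lem:information}, with essentially no new work beyond matching up the notation. The key observation is that the Lemma applies to \emph{any} function $f(X,Y)$ where $X$ is uniform and the per-$y$ failure probability is bounded; the $\ur^\subset$-protocol furnishes exactly such an $f$.

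Concretely, I would first fix $S \subseteq [n]$ and identify subsets $T \subseteq S$ with their indicator vectors in $\{0,1\}^n$, so $T$ can serve as the $Y$-argument of Lemma~\ref{lem:information} with $q = n$. Let $b$ be the number of random bits used by $\mathcal{P}$, so the random source $X$ is uniform on $\{0,1\}^b$. Define
\begin{equation*}
f(x,T) \;=\; \mathbf{1}\!\left[\query(\sketch(\mathbf{1}_S), \mathbf{1}_T) \notin S \setminus T \text{ when $\mathcal{P}$ uses random string } x\right].
\end{equation*}
Since $T \subseteq S$, we have $\supp(\mathbf{1}_T) \subseteq \supp(\mathbf{1}_S)$, so $(\mathbf{1}_S, \mathbf{1}_T)$ is a valid $\ur^\subset$ instance on which $\mathcal{P}$ must succeed with probability $\ge 1-\delta$; in particular, for \emph{every} fixed $T$, $\Pr_X(f(X,T) = 1) \le \delta$. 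This verifies the hypothesis of Lemma~\ref{lem:information}.

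Next I would invoke Lemma~\ref{lem:information} with $Y = T$ (viewed as its indicator vector), which yields
\begin{equation*}
\Pr(f(X,T) = 1) \;\le\; \frac{I(X;T) + H_2(\delta)}{\log \tfrac{1}{\delta}}.
\end{equation*}
By the definition of $f$, the left-hand side is exactly $\Pr(\query(\sketch(\mathbf{1}_S), \mathbf{1}_T) \notin S \setminus T)$, giving the claimed bound.

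There is no real obstacle here: the only conceptual point worth flagging is that Lemma~\ref{lem:information} allows $Y$ to be correlated with $X$ in an arbitrary way, which is precisely what is needed later in the analysis (when $T = S \setminus S_{r-1}$ is generated by the encoder's own adaptive interaction with $\mathcal{P}$ and therefore depends on $X$). The mutual-information term $I(X;T)$ captures the degradation of the worst-case failure bound exactly as required, and controlling $I(X;S_{r-1})$ is left to Lemma~\ref{lemma:mutual-entropy-bound}.
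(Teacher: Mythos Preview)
Your proposal is correct and matches the paper's approach: the corollary is stated there without proof, precisely because it is meant to follow by plugging the protocol's failure indicator into Lemma~\ref{lem:information}, which is exactly what you do. One small technical nit: Lemma~\ref{lem:information} requires $\Pr(f(X,y)=1)\le\delta$ for \emph{all} $y\in\{0,1\}^q$, but your $f$ as written need not satisfy this when $y$ is not (the indicator of) a proper subset of $S$; this is trivially fixed by setting $f(x,y)=0$ for such $y$, or equivalently by taking $q=|S|$ and encoding only proper subsets of $S$.
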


\begin{lemma}\label{lemma:mutual-entropy-bound}
  $I(X;S_r)\le 6K$, for $r=1,\ldots, R$.
\end{lemma}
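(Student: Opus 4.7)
The plan is to bound $I(X;S_r)$ by writing it as $H(S_r)-H(S_r\mid X)$ and exploiting the structural identity from Lemma~\ref{lemma:zero-fail-prob} (that $S_r$ is the set of the $n_r$ largest-$\pi$ elements of $S\setminus A_r$, where $A_r=\{s_i:b_i=1,\,i\le r\}$ and $|A_r|\le r$) together with the independence $\pi\perp X$. The upper bound $H(S_r)\le\log\binom{m}{n_r}$ is immediate since $S_r\in\binom{S}{n_r}$. The heart of the argument is the lower bound $H(S_r\mid X)\ge\log\binom{m-r}{n_r}$: for each fixed $X=x$, the permutation $\pi$ is still uniform, and the map $\pi\mapsto S_r(x,\pi)$ spreads $S_r$ over a family of size at least $\binom{m-r}{n_r}$ sufficiently uniformly. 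A coupling argument, matching each realization of the algorithm with a reference execution in which $A_r$ is replaced by a fixed $r$-subset of $S$, should give the bound.

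Combining the two bounds yields
$$I(X;S_r)\le \log\binom{m}{n_r}-\log\binom{m-r}{n_r}=\sum_{i=0}^{r-1}\log\frac{m-i}{m-n_r-i},$$
so the remaining step is arithmetic. With $n_r=\lfloor m\cdot 2^{-r/K}\rfloor$, each term $\log((m-i)/(m-n_r-i))$ is approximately $-\log(1-2^{-r/K})$, and the sum is roughly $r\cdot(-\log(1-2^{-r/K}))$. Writing $r=tK$, this becomes $tK\cdot(-\log(1-2^{-t}))$, a function that behaves like $tK\log(1/t)\le K/e$ as $t\to 0$ and like $tK\cdot 2^{-t}/\ln 2=O(K)$ as $t\to\infty$, with a global maximum of $\Theta(K)$ near $t=1$. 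Splitting the analysis into $r<K$ and $r\ge K$ and using $m\ge 4K$ (ensured by the parameter settings $m=\lfloor\sqrt{n\log(1/\delta)}\rfloor$ and $\log(1/\delta)\le n/64$), one obtains the quantitative bound $I(X;S_r)\le 6K$ uniformly for $r\in[1,R]$, with room to spare.

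The main obstacle is the lower bound on $H(S_r\mid X)$: because $A_r$ depends on $\pi$ through the iterated construction of $S_0,\ldots,S_{r-1}$, one cannot directly condition on $A_r$ and invoke uniformity of $\pi$ on $S\setminus A_r$. Indeed, as can be verified by direct computation in round $r=2$, the conditional distribution of $S_2$ given $(X,A_2)$ is not uniform over $\binom{S\setminus A_2}{n_2}$ but over a strictly smaller family $\binom{S_1\setminus\{s_2\}}{n_2}$ determined by the intermediate $S_1$. Overcoming this requires a careful coupling/exchangeability argument that leverages the crucial fact that $\mathcal{P}$ has no direct access to $\pi$ (it only sees its input $\mathbf{1}_{S\setminus S_{i-1}}$), so that averaged over $\pi$ the distribution of $S_r\mid X$ retains enough spread to match the idealized uniform-over-$\binom{S\setminus A_r^\star}{n_r}$ entropy, for a suitable ``worst-case'' reference set $A_r^\star$ of size at most $r$.
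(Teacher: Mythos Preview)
Your structural observation that $S_r$ equals the $n_r$ elements of $S\setminus A_r$ with largest $\pi$-values is correct, and your arithmetic bounding $\log\binom{m}{n_r}-\log\binom{m-r}{n_r}$ by $O(K)$ is also fine. The gap is exactly the step you flag as the ``main obstacle'': the inequality $H(S_r\mid X=x)\ge\log\binom{m-r}{n_r}$ is not true in general, and the coupling you sketch cannot establish it. The adaptive choice of $s_i$ really does let the adversary concentrate $S_r$ more than removal of any fixed $r$-set would.

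Here is a small counterexample to the pointwise bound $\Pr(S_r=T\mid X=x)\le 1/\binom{m-r}{n_r}$ and to the resulting entropy bound. Take $m=5$, $n_1=3$, $n_2=1$, $S=\{1,\dots,5\}$, and fix $x$ so that the protocol outputs $s_1=1$ and then, given $S_1$, outputs $s_2=5$ if $5\in S_1$ and $s_2=4$ otherwise. Then $S_1$ is uniform over the four $3$-subsets of $\{2,3,4,5\}$, and one checks directly that $\Pr(S_2=2)=\Pr(S_2=3)=3/8$, $\Pr(S_2=4)=1/4$, $\Pr(S_2=5)=0$. Thus $\Pr(S_2=2)=3/8>1/3=1/\binom{m-2}{n_2}$, and $H(S_2)\approx 1.561<\log 3\approx 1.585$. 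The point is that by preferentially removing elements from a fixed target set whenever possible, the adversary skews $S_r$ away from that set; a ``reference execution with fixed $A_r^\star$'' cannot mimic this, since there the removed set does not react to $\pi$.

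The paper's argument does not attempt your clean $\binom{m-r}{n_r}$ bound. Instead it bounds $\Pr(S_r=T\mid X=x)$ round by round: conditional on $T\subseteq S_{i-1}$, one element $s_i$ may be removed adversarially, but the remaining $n_{i-1}-n_i-1$ removals are uniform among $S_{i-1}\setminus\{s_i\}$, giving $\Pr(T\subseteq S_i\mid T\subseteq S_{i-1})\le \binom{n_{i-1}-n_r-1}{n_{i-1}-n_i-1}/\binom{n_{i-1}-1}{n_{i-1}-n_i-1}$. The product of these factors telescopes to $\frac{1}{\binom{m}{n_r}}\prod_{i=1}^r\frac{n_{i-1}}{n_{i-1}-n_r}$, and the work goes into showing this correction product is at most $2^{6K}$ (this is where the geometric schedule $n_i\approx m\cdot 2^{-i/K}$ and a $q$-Pochhammer-type estimate enter). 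In the toy example above this product bound gives exactly $3/8$, which is tight; your proposed $1/\binom{m-r}{n_r}$ bound is strictly smaller and hence false. So the round-by-round accounting for the one adaptive deletion per step is not an artifact of the proof---it is necessary.
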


\begin{proof}
  Note that $I(X;S_r)=H(S_r)-H(S_r|X)$. Since $|S_r|=n_r$ and $S_r\subseteq S$, $H(S_r)\le \log {m \choose n_r}$. Here is the main idea to lower bound $H(S_r|X)$: By definition of conditional entropy, $H(S_r|X)=\sum_x{p_x\cdot H(S_r|X=x)}$. We fix an arbitrary $x$. If we can prove that for any $T\subseteq S$ where $|T|=n_r$, $\Pr(S_r=T|X=x)\le p$, then by definition of entropy we have $H(S_r|X=x)\ge\log\frac{1}{p}$. 
  
  First we can prove for any fixed $T$,
  
  \begin{align}
    \Pr(S_r=T|X=x)\le \prod_{i=1}^{r}{\frac{{n_{i-1}-n_r-1 \choose n_{i-1}-n_i-1}}{{n_{i-1}-1 \choose n_{i-1}-n_i-1}}}.
  \end{align}
  
  We have $\Pr(S_r=T|X=x)=\Pi_{i=1}^{r}{\Pr(T\subseteq S_i|T\subseteq S_{i-1})}$. 
  On round $i$ ($1\le i \le r$), $\enc$ removes $n_{i-1}-n_i$ elements (at least $n_{i-1}-n_i-1$ of which are chosen all at random) from $S_{i-1}$ to obtain $S_i$. 
  Conditioned on the event that $T\subseteq S_{i-1}$, the probability that $T\subseteq S_i$ is at most ${{n_{i-1}-n_r-1 \choose n_{i-1}-n_i-1}}/{{n_{i-1}-1 \choose n_{i-1}-n_i-1}}$, where the equation achieves when $s_i\in S_{i-1}\backslash T$, and $\enc$ takes a uniformly random subset of $S_{i-1}\backslash \{s_i\}$ of size $n_{i-1}-n_i-1$, so that the subset does not intersect with $T$.
  
  Next we can prove 
  
  \begin{align}
    \prod_{i=1}^{r}{\frac{{n_{i-1}-n_r-1 \choose n_{i-1}-n_i-1}}{{n_{i-1}-1 \choose n_{i-1}-n_i-1}}} \le \frac{2^{6K}}{{m \choose n_r}}. \label{eqn:prod-bound}
  \end{align}
    
  For notational simplicity, let $n^{\underline{k}}$ denote $n\cdot (n-1)\ldots (n-k+1)$. We have 
  \begin{align}
    \prod_{i=1}^{r}{\frac{{n_{i-1}-n_r-1 \choose n_{i-1}-n_i-1}}{{n_{i-1}-1 \choose n_{i-1}-n_i-1}}}
    =\prod_{i=1}^{r}\frac{(n_{i-1}-n_r-1)!n_i!}{(n_{i-1}-1)!(n_i-n_r)!}
    =\prod_{i=1}^{r}\frac{n_i^{\underline{n_r}}}{(n_{i-1}-1)^{\underline{n_r}}}
    =\prod_{i=1}^{r} \left( \frac{n_i^{\underline{n_r}}}{n_{i-1}^{\underline{n_r}}}\cdot \frac{n_{i-1}}{n_{i-1}-n_r} \right).
  \end{align}
  
  By telescoping,
  \begin{align}
    \prod_{i=1}^{r} \frac{n_i^{\underline{n_r}}}{n_{i-1}^{\underline{n_r}}}
    =\frac{n_r^{\underline{n_r}}}{n_0^{\underline{n_r}}}
    =\frac{n_r!(n_0-n_r)!}{n_0!}=\frac{1}{{n_0 \choose n_r}}
    =\frac{1}{{m \choose n_r}}.
    \label{eqn:telescoping}
  \end{align}
  
  Moreover, 
  \begin{align}
    \prod_{i=1}^{r} \frac{n_{i-1}}{n_{i-1}-n_r}
    \le\prod_{i=1}^{r} \frac{1}{1-\frac{m\cdot 2^{-r/K}}{m\cdot 2^{-(i-1)/K}-1}}
    \le\prod_{i=1}^{r} \frac{1}{1-\frac{m\cdot 2^{-r/K}+1}{m\cdot 2^{-(i-1)/K}}}
    =\prod_{j=1}^{r} \frac{1}{1-2^{-j/K}-\frac{2^{\frac{r-j}{K}}}m}.
    \label{eqn:bound-with-floor}
  \end{align}
  
  By our setting of parameters 
  $$\frac{2^{\frac rK}}m \le \frac{2^{\frac RK}}m \le \frac{1}{4K} .$$
  
  Therefore, for $j\in \{1,\ldots, r\}$,
  $$\frac{1}{1-2^{-\frac jK}-\frac{2^{\frac{r-j}{K}}}m}\le \frac{1}{1-(1+\frac{1}{4K})2^{-\frac jK}}.$$ 
  
  By Taylor series $2^{1/K} = \sum_{n=0}^{\infty}{\frac{(\ln 2 )^n}{n!K^n}} >1+\frac{\ln 2}{K}>1+\frac{1}{4K}$, and thus $\frac{1}{1-(1+\frac{1}{4K})2^{-j/K}}\le \frac{1}{1-2^{(1-j)/K}}$, for $j=2,\ldots, r$. For $j=1$, we have $\frac{1}{1-(1+\frac{1}{4K})2^{-\frac 1K}} \le 2^K$.
  
  By Lemma~\ref{lemma:Pochhammer}, we have $\prod_{j=1}^{\infty} \frac{1}{1-2^{-j/K}}\le 2^{5K}$. Therefore, the right hand side of \eqref{eqn:bound-with-floor} is upper bounded by $2^{6K}$. Together with \eqref{eqn:telescoping}, we prove \eqref{eqn:prod-bound} holds.  
  
  Finally, let $p={2^{6K}}/{{m\choose n_r}}$, we have $\Pr(S_r=T|X=x)\le p$ and thus $H(S_r|X=x)\ge \log\frac{1}{p}=\log{{m\choose n_r}}-6K$. Therefore, $H(S_r|X)\ge \log{{m\choose n_r}}-6K$ and so $I(X;S_r)=H(S_r)-H(S_r|X)\le 6K$.  
\end{proof}

\begin{lemma}\label{lemma:Pochhammer}
  Let $K\in \mathbb{N}$ and $K\ge 1$. We have $\prod_{j=1}^{\infty} \frac{1}{1-2^{-j/K}}\le 2^{5K}$.
\end{lemma}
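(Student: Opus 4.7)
The plan is to bound $\log_2$ of the product and reduce it to the Basel series. I start by expanding each factor using the Mercator series $-\ln(1-x) = \sum_{k \ge 1} x^k/k$, applied to $x = 2^{-j/K} \in (0,1)$. This gives
$$
\log_2 \prod_{j=1}^{\infty} \frac{1}{1-2^{-j/K}}
= \frac{1}{\ln 2} \sum_{j=1}^{\infty} \sum_{k=1}^{\infty} \frac{2^{-jk/K}}{k}.
$$
Because every summand is nonnegative, Tonelli's theorem lets me swap the order of summation. The inner sum over $j$ is then a geometric series that evaluates cleanly, yielding
$$
\log_2 \prod_{j=1}^{\infty} \frac{1}{1-2^{-j/K}}
= \frac{1}{\ln 2} \sum_{k=1}^{\infty} \frac{1}{k}\cdot \frac{2^{-k/K}}{1-2^{-k/K}}
= \frac{1}{\ln 2} \sum_{k=1}^{\infty} \frac{1}{k\bigl(2^{k/K}-1\bigr)}.
$$

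Next I control the denominator with the elementary convexity estimate $e^x \ge 1+x$ for $x \ge 0$, applied to $x=(k \ln 2)/K$, giving $2^{k/K}-1 \ge (k\ln 2)/K$. Substituting this bound into every term decouples $k$ from $K$:
$$
\log_2 \prod_{j=1}^{\infty} \frac{1}{1-2^{-j/K}}
\;\le\; \frac{1}{\ln 2}\cdot \frac{K}{\ln 2}\sum_{k=1}^{\infty} \frac{1}{k^2}
\;=\; \frac{K\pi^2}{6(\ln 2)^2}.
$$

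Finally, I verify numerically that the absolute constant $\pi^2/(6(\ln 2)^2)$ is below $5$: using $\pi^2/6 < 1.645$ and $(\ln 2)^2 > 0.480$, the ratio is under $3.43$. Exponentiating base $2$ concludes that the infinite product is at most $2^{5K}$ for every $K \ge 1$.

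I do not expect a real obstacle in this proof. The only step that might look worrisome is that the factors $(1-2^{-j/K})^{-1}$ blow up polynomially in $K$ for small $j$, so a naive term-by-term bound would lose a $\log K$ factor. The power-series rearrangement sidesteps this entirely by summing over $k$ (the Taylor index) before $j$, turning the near-singular factors into a convergent Basel-type series whose total contribution is only $O(K)$. The nonnegativity of all terms makes the swap of summations immediate, so no analytic subtlety remains.
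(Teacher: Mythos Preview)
Your proof is correct and takes a genuinely different route from the paper's. The paper splits the product at $j=2K$: for $1\le j\le 2K$ it uses the pointwise bound $\frac{1}{1-2^{-x}}\le \frac{8}{3x}$ on $(0,2]$ together with a Stirling-type estimate $(2K)!\ge (2K/e)^{2K}$ to get a factor of at most $2^{4K}$; for $j>2K$ it groups terms into blocks of length $K$ and crudely bounds the tail by $2^K$. Your approach instead expands each factor by the Mercator series, swaps the order of summation, and reduces everything to $\sum_{k\ge 1} 1/(k(2^{k/K}-1))$, which via $e^x\ge 1+x$ collapses to a Basel series and gives $\log_2$ of the product at most $\frac{\pi^2}{6(\ln 2)^2}K\approx 3.43K$. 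Your argument is shorter, avoids the somewhat ad hoc split point and the Stirling step, and in fact yields a sharper constant than the paper's $5K$; the paper's argument, on the other hand, is entirely elementary and does not invoke $\pi^2/6$ or any infinite-series identity beyond geometric sums.
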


\begin{proof}
  First, we bound the product of first $2K$ terms. Note that $\frac{1}{1-2^{-x}}\le \frac{8}{3x}$ for $0<x\le 2$. Therefore, 
  \begin{align}
    \prod_{j=1}^{2K}\frac{1}{1-2^{-j/K}}
    \le (8/3)^{2K}\cdot \frac{K^{2K}}{(2K)!}
    \le (8/3)^{2K}\cdot \frac{K^{2K}}{(2K/e)^{2K}}
    = (4e/3)^{2K}
    < 2^{4K}. 
  \end{align}
  
  Then, we bound the product of the rest terms
  \begin{align}
    \prod_{j=2K+1}^{\infty}\frac{1}{1-2^{-j/K}} 
    \le \prod_{j=2K+1}^{\infty}\frac{1}{1-2^{-\lfloor j/K \rfloor}} 
    \le \prod_{i=2}^{\infty}\left( \frac{1}{1-2^{-i}}\right)^K 
    \le \left( \frac{1}{1-\sum_{i=2}^{\infty}2^{-i}}\right)^K
    = 2^K.
  \end{align}
  
  Multiplying two parts proves the lemma.
\end{proof}

\begin{theorem}
  $\randcom^{\rightarrow,pub}_\delta(\ur^\subset) = \Omega(\log \frac{1}{\delta}\log^2 \frac{n}{\log (1/\delta)} )$, given that $64 \le \log \frac{1}{\delta} \le \frac{n}{64}$.
\end{theorem}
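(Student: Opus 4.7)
The plan is to instantiate Lemma~\ref{lemma:lb-meta} with the encoding/decoding pair $(\enc,\dec)$ of Algorithms~\ref{algo:enc}--\ref{algo:dec}. By Lemma~\ref{lemma:zero-fail-prob}, this pair always recovers $S$, and its output is the triple $(M,B,b)$ where $M=\sketch(\mathbf{1}_S)$ has expected length at most $\s=\randcom^{\rightarrow,pub}_\delta(\ur^\subset)$, $b\in\{0,1\}^R$, and $B\subseteq S$. A prefix-free encoding of $(B,b)$ uses at most $R+\lceil\log_2|B|\rceil+\lceil\log\binom{n}{|B|}\rceil$ bits, so letting $\s'$ be its expectation, Lemma~\ref{lemma:lb-meta} gives
\[
 \s \;\ge\; \log\tbinom{n}{m} \;-\; \E\!\left[\log\tbinom{n}{|B|}\right] \;-\; R \;-\; O(\log m).
\]

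First I would bound the per-round failure probability $\Pr(b_r=0)$. The random variable $S_{r-1}$ is a deterministic function of $X$ and the independent permutation $\pi$, and the $r$-th query Bob answers is $\mathbf{1}_{S\setminus S_{r-1}}$. Applying Corollary~\ref{corollary:sampler-failure} with $T=S\setminus S_{r-1}$ (and noting $I(X;S\setminus S_{r-1})=I(X;S_{r-1})$ since $S$ is fixed by $\enc$) yields
\[
 \Pr(b_r=0) \;\le\; \frac{I(X;S_{r-1})+H_2(\delta)}{\log(1/\delta)}.
\]
For $r=1$ the mutual information is $0$ since $S_0=S$, and for $r\ge 2$ Lemma~\ref{lemma:mutual-entropy-bound} bounds $I(X;S_{r-1})\le 6K$. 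With $K=\lfloor\log(1/\delta)/16\rfloor$ and $\log(1/\delta)\ge 64$, this forces $\Pr(b_r=0)\le(6K+1)/\log(1/\delta)\le 1/2$, so $\E b_r\ge 1/2$ for every $r$. Summing, $\E|A|=\sum_r\E b_r\ge R/2$, i.e.\ $\E|B|\le m-R/2$.

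Next I would convert the saved elements into saved bits. Applying Lemma~\ref{lemma:bits-saving} with $W=|B|$ and $d=R/2$ gives
\[
 \E\!\left[\log\tbinom{n}{m}-\log\tbinom{n}{|B|}\right] \;\ge\; \tfrac{R}{2}\log\!\bigl(n/m-1\bigr),
\]
and combining with the displayed bound on $\s$,
\[
 \s \;\ge\; \tfrac{R}{2}\log(n/m-1) \;-\; R \;-\; O(\log m).
\]
Finally I would substitute $m=\lfloor\sqrt{n\log(1/\delta)}\rfloor$. Since $\log(1/\delta)\le n/64$ we have $m\le n/8$, hence $\log(n/m-1)=\Theta(\log(n/m))=\Theta(\log(n/\log(1/\delta)))$. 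Meanwhile $K=\Theta(\log(1/\delta))$ and $m/(4K)=\Theta(\sqrt{n/\log(1/\delta)})$, so $R=\Theta(\log(1/\delta)\cdot\log(n/\log(1/\delta)))$. Therefore the main term is $\Theta(\log(1/\delta)\cdot\log^2(n/\log(1/\delta)))$, which dominates both $R$ and $\log m$ because $\log(n/\log(1/\delta))\ge\log 64>1$; this yields the claimed bound.

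The main obstacle is justifying that Corollary~\ref{corollary:sampler-failure} applies at each round despite $S_{r-1}$ being correlated with the protocol's randomness $X$. The corollary tolerates this correlation provided $I(X;S_{r-1})$ is small relative to $\log(1/\delta)$, which is exactly what Lemma~\ref{lemma:mutual-entropy-bound} (built on the $\pi$-based random masking in $\enc$) supplies. Everything else is parameter arithmetic, and the $O(\log m)$ and $R$ losses from encoding $|B|$ and $b$ are safely absorbed into the main term.
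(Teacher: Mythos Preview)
Your proposal is correct and follows essentially the same approach as the paper's own proof: you invoke Lemmas~\ref{lemma:zero-fail-prob}, \ref{lemma:lb-meta}, \ref{lemma:bits-saving}, \ref{lemma:mutual-entropy-bound} and Corollary~\ref{corollary:sampler-failure} in the same order and to the same effect, differing only in minor bookkeeping (you keep the $H_2(\delta)$ from the corollary and obtain $\E b_r\ge 1/2$, whereas the paper bounds $H_2(\delta)\le 1$ and gets $\E b_r\ge 39/64$). The parameter arithmetic at the end is handled the same way.
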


\begin{proof}
  By Lemma~\ref{lemma:zero-fail-prob}, the success probability of protocol $(\enc,\dec)$ is $1$. 
  By Lemma~\ref{lemma:lb-meta}, we have $\s\ge \log (^n_m) - \s' -1$, where $\s'=\log n + R+ \E(\log (^n_{|B|}))$. 
  The size of $B$ is $|B|=|S|-\sum_{r=1}^{R}{b_r}$.
  By Corollary~\ref{corollary:sampler-failure}, conditioned on $S$, $\Pr(b_r=0)\le \frac{I(X;S_{r-1})+1}{\log\frac{1}{\delta}}$. 
  By Lemma~\ref{lemma:mutual-entropy-bound}, $I(X;S_{r-1})\le 6K$ (Note that when $r=1$, $I(X;S_0)=0\le 6K$). 
  Therefore, $\E(b_r)\ge 1-\frac{6K+1}{\log\frac{1}{\delta}}$.
  By the setting of parameters (see Algorithm~\ref{algo:para}) we have $\E(b_r)\ge \frac{39}{64}$. Therefore, $\E(|B|)\le |S|-\frac{39}{64}R$. 
  By Lemma~\ref{lemma:bits-saving}, $\log (^n_m)-\E(\log (^n_{|B|}))\ge \frac{39}{64}R\cdot \log (\frac{n}{m}-1) \ge \frac{1}{2}R\log (\frac{n}{\log(1/\delta)})$. 
  Furthermore, $\frac{1}{6}R\log \frac{n}{\log (1/\delta)} \ge R$.
  Thus we obtain $\s \ge \frac{R}{3}\log \frac{n}{\log(1/\delta)} -(\log n + 1)  =\Omega(\log \frac{1}{\delta}\log^2 \frac{n}{\log (1/\delta)} )$.
\end{proof}

\section{Communication Lower Bound for $\ur_k^\subset$}\label{sec:k-samples-lb}

In this section, we prove the lower bound $\randcom^{\rightarrow,pub}_{1/2}(\ur^\subset_k) = \Omega(\min\{n, k\log^2 \frac{n}{k}\})$. In fact, our lower bound holds for any failure probability $\delta$ bounded away from $1$. Let $\mathcal{P}$ denote a $\ur_k^\subset$-protocol where Alice sends $\sketch_k(x)$ to Bob, and Bob outputs $\query_k(\sketch_k(x), y)$.  We consider the following encoding/decoding scheme $(\enc_k, \dec_k)$ for $S\in {[n] \choose m}$.  $\enc_k$ computes $M\leftarrow \sketch_k(\mathbf{1}_S)$ as part of its message. In addition, $\enc_k$ includes $B\subseteq S$ constructed as follows, spending $\lceil\log{n\choose |B|}\rceil$ bits.  Initially $B= S$, and $\enc_k$ proceeds in $R=\Theta(\log (n/k))$ rounds.  Let $S_0=S\supseteq S_1\supseteq \ldots \supseteq S_R$ where $S_r$ is generated by sub-sampling each element in $S_{r-1}$ with probability $\frac{1}{2}$.  In round $r$ ($r=1,\ldots, R$), $\enc_k$ tries to obtain $k$ elements from $S_{r-1}$ by invoking $\query_k(M, \mathbf{1}_{S\backslash S_{r-1}})$, denoted by $A_k$, and removes $A_k\cap (S_{r-1}\backslash S_{r})$ (whose expected size is $\frac{k}{2}$) from $B$.  Note that $\dec_k$ is able to recover the elements in $A_k\cap (S_{r-1}\backslash S_{r})$.  For each round the failure probability of $\query_k$ is at most $\delta$.  Thus we have $\E(|S|-|B|)\ge \frac{k}{2}\cdot (1-\delta) \cdot R=\Omega(k\log\frac{n}{k})$.  Furthermore, each element contains $\Theta(\log \frac{n}{k})$ bits of information, thus yielding a lower bound of $\Omega(k\log^2\frac{n}{k})$ bits.

In this section we assume $k \le n/2^{10}$, since for larger $n$ we have an $\Omega(n)$ lower bound.

\subsection{Encoding/decoding scheme}
\begin{algorithm}[H] 
  \caption{Variables Shared by Encoder $\enc_k$ and Decoder $\dec_k$.} \label{algo:para4}
  \begin{algorithmic}[1] 
    \State $m\leftarrow \lfloor \sqrt{nk} \rfloor$
    \State $R\leftarrow \lfloor \frac{1}{2}\log (n/k) - 2 \rfloor$ \Comment{Note that $R\ge 3$ because $k\le \frac{n}{2^{10}}$}
    \State $T_0\leftarrow [n]$
    \For {$r = 1, \ldots, R$}
      \State $T_r\leftarrow \emptyset$
      \State For each $a\in T_{r-1}$, $T_r\leftarrow T_r\cup \{a\}$ with probability $\frac{1}{2}$ \Comment{We have $S_r=S\cap T_r$}
    \EndFor
  \end{algorithmic}
\end{algorithm}

\begin{algorithm}[H] 
  \caption{Encoder $\enc_k$.} \label{algo:enc4}
  \begin{algorithmic}[1]
    \Procedure{$\enc_k$}{$S$}
    \State $M \leftarrow \sketch_k(\mathbf{1}_S)$
    \State $A\leftarrow \emptyset$
    \For {$r=1,\ldots,R$}
    \State $A_r\leftarrow \query_k(M, \mathbf{1}_{S\backslash (S\cap T_{r-1})})$
    \If {$A_r\subseteq S\cap T_{r-1}$} \Comment{i.e. if $A_r$ is valid}
      \State $b_r\leftarrow 1$ \Comment{$b$ is a binary string of length $R$, indicating if $\query_k$ succeeds in round $r$}
      \State $A\leftarrow A \cup (A_r\cap (T_{r-1}\backslash T_r))$
    \Else 
      \State $b_r\leftarrow 0$
    \EndIf
    \EndFor
      \State \Return ($M$, $S\backslash A$, $b$) 
    \EndProcedure
  \end{algorithmic}
\end{algorithm}

\begin{algorithm}[H] 
  \caption{Decoder $\dec_k$.} \label{algo:dec4}
  \begin{algorithmic}[1]
    \Procedure{$\dec_k$}{$M$, $B$, $b$}
    \State $A\leftarrow \emptyset$
    \State $C_0 \leftarrow \emptyset$
    \For {$r=1,\ldots,R$}
      \State $C_r\leftarrow C_{r-1}$
      \If {$b_r=1$}
        \State $A_r\leftarrow \query_k(M, \mathbf{1}_{C_{r-1}})$ \Comment{Invariant: $C_r=S\backslash (S\cap T_r)$}
        \State $A\leftarrow A \cup (A_r\cap (T_{r-1}\backslash T_r))$
        \State $C_r\leftarrow C_r \cup (A_r\cap (T_{r-1}\backslash T_r))$
      \EndIf
      \State $C_r\leftarrow C_r \cup (B\cap (T_{r-1}\backslash T_r))$
    \EndFor
    \State \Return $B\cup A$ 
    \EndProcedure
  \end{algorithmic}
\end{algorithm}

\subsection{Analysis}

\begin{theorem}
  $\randcom^{\rightarrow,pub}_\delta(\ur_k^\subset) = \Omega(k\log^2 \frac{n}{k} )$, given that $1 \le k \le \frac{n}{2^{10}}$ and $\delta \le \frac{1}{2}$.
\end{theorem}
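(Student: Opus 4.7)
The plan is to follow the same structure as the proof of the $\ur^\subset$ lower bound in Section~\ref{sec:optimal-lb}, but with a crucial simplification: the queries that $\enc_k$ issues to the protocol are deterministic functions of the subsampling randomness and are therefore \emph{independent} of the protocol's public coins $X$. This removes the need for Lemma~\ref{lem:information} and lets a direct union-style bound do the work. Concretely, I would establish three ingredients: (i) $\dec_k(\enc_k(S))=S$ with probability $1$; (ii) $\E(|S|-|B|)=\Omega(k\log(n/k))$; (iii) finish via Lemma~\ref{lemma:bits-saving} and Lemma~\ref{lemma:lb-meta}.

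First I would prove correctness by induction on $r$, maintaining the invariant $C_r=S\setminus(S\cap T_r)$. The base case $r=0$ is trivial since $T_0=[n]$ and $C_0=\emptyset$. For the inductive step, the invariant guarantees that the decoder's query input $\mathbf{1}_{C_{r-1}}$ matches the encoder's $\mathbf{1}_{S\setminus(S\cap T_{r-1})}$, so both parties compute the same $A_r$ and the same bit $b_r$. Because the sets $T_{r-1}\setminus T_r$ are pairwise disjoint across rounds, the set $A$ aggregated by the encoder satisfies $A\cap(T_{r-1}\setminus T_r)=A_r\cap(T_{r-1}\setminus T_r)$ exactly when $b_r=1$ and is empty otherwise. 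Thus $B\cap(T_{r-1}\setminus T_r)=(S\setminus A)\cap(T_{r-1}\setminus T_r)$ supplies precisely the elements of $S\cap(T_{r-1}\setminus T_r)$ missed by $A_r$, so the decoder's update $C_r\leftarrow C_r\cup(A_r\cap(T_{r-1}\setminus T_r))\cup(B\cap(T_{r-1}\setminus T_r))$ restores $C_r=S\setminus(S\cap T_r)$. After $R$ rounds the decoder has collected the same $A$, and $B\cup A=S$.

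Next I would bound $\E|A|=\E\sum_{r=1}^R b_r\cdot|A_r\cap(T_{r-1}\setminus T_r)|$. Fix $r$ and condition on $T_{r-1}$. The query input depends only on $T_{r-1}$, so $\Pr_X(b_r=1\mid T_{r-1})\ge 1-\delta\ge 1/2$. Given $T_{r-1}$ and $X$, the set $A_r$ is fixed; the additional subsampling producing $T_r$ from $T_{r-1}$ is independent of $X$, and each element of $A_r\subseteq T_{r-1}$ lands in $T_{r-1}\setminus T_r$ independently with probability $1/2$, so $\E[|A_r\cap(T_{r-1}\setminus T_r)|\mid T_{r-1},X]=|A_r|/2$. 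On the event $|S\cap T_{r-1}|\ge k$ a successful query returns $|A_r|=k$ indices. With $m=\lfloor\sqrt{nk}\rfloor$ and $R\le\tfrac12\log(n/k)-2$, we have $\E|S\cap T_{r-1}|=m/2^{r-1}\ge 8k$ for all $r\le R$; a standard Chernoff bound on $\mathrm{Bin}(m,2^{-(r-1)})$ gives $\Pr(|S\cap T_{r-1}|\ge k)\ge 1-e^{-\Omega(k)}\ge 19/20$ uniformly in $k\ge 1$. Combining, $\E[b_r\cdot|A_r\cap(T_{r-1}\setminus T_r)|]\ge (1-\delta)\cdot\Pr(|S\cap T_{r-1}|\ge k)\cdot k/2=\Omega(k)$, so $\E|A|=\Omega(kR)=\Omega(k\log(n/k))$.

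Finally I would plug into Lemma~\ref{lemma:lb-meta} with $\s'=\log n+R+\E\log\binom{n}{|B|}$ (a prefix of $\lceil\log n\rceil$ bits encodes $|B|$, the string $b$ costs $R$ bits, and $B$ itself costs $\lceil\log\binom{n}{|B|}\rceil$ bits). Since $|B|\le m$ and $\E(m-|B|)=\E|A|=\Omega(k\log(n/k))$, Lemma~\ref{lemma:bits-saving} yields
\[
\E\Bigl[\log\binom{n}{m}-\log\binom{n}{|B|}\Bigr]\ge \Omega(k\log(n/k))\cdot\log\!\Bigl(\tfrac{n}{m}-1\Bigr)=\Omega(k\log^2(n/k)),
\]
using $n/m=\sqrt{n/k}$ and $k\le n/2^{10}$ to bound $\log(n/m-1)=\Theta(\log(n/k))$. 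Rearranging gives $\s\ge\Omega(k\log^2(n/k))-O(\log n+R)=\Omega(k\log^2(n/k))$, absorbing the lower-order terms since $k\log^2(n/k)$ dominates $\log n$ whenever the $\Omega(n)$ regime is not already in force. The main obstacle is the careful independence bookkeeping in step (ii) — specifically, verifying that the subsampling randomness generating the $T_r$'s is independent of the protocol's public coins, which is what allows the trivial per-round success guarantee $\Pr(b_r=1)\ge 1-\delta$ without any mutual-information correction.
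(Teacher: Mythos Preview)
Your proposal is correct and follows essentially the same approach as the paper's proof: both use the encoder/decoder pair $(\enc_k,\dec_k)$ from Algorithms~\ref{algo:para4}--\ref{algo:dec4}, establish correctness via the invariant $C_r=S\setminus(S\cap T_r)$, exploit that the queries depend only on the subsampling randomness (hence are independent of the protocol's coins), and finish via Lemma~\ref{lemma:bits-saving} and Lemma~\ref{lemma:lb-meta}. The one minor technical variation is that the paper conditions once on the global event $\success=\{|S\cap T_R|\ge k\}$ (which, since $T_R\subseteq T_{r-1}$, forces $|S\cap T_{r-1}|\ge k$ for every round simultaneously) and shows $\Pr(\success)\ge 1/2$ via a single Chernoff bound, whereas you bound $\Pr(|S\cap T_{r-1}|\ge k)$ separately for each round; your per-round treatment is arguably cleaner since it sidesteps the need to track conditional independence of the subsampling from $T_{r-1}$ to $T_r$ under the global conditioning.
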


\begin{proof}
Let $S_r=S\cap T_r$.  Let $\success$ denote the event that $|S\cap T_R|=|S_R|\ge k$.  Note that $\E|S_R|=\frac{1}{2^R}m=4k$. By the Chernoff bound, $\Pr(\success)\ge \frac{1}{2}$.  In the following, we argue conditioned on $\success$. Namely, in each round $r$, there are at least $k$ items in $S_r$.
  
Similar to Lemma~\ref{lemma:zero-fail-prob}, we can prove the protocol $(\enc_k,\dec_k)$ always succeeds.  By Lemma~\ref{lemma:lb-meta}, we have $\s\ge \log (^n_m) - \s' -2$, where $\s'=\log n + R+ \E \log (^n_{|B|})$.  The size of $B$ is $|B|=|S|-\sum_{r=1}^{R}{(b_r \cdot |A_r \cap (S_{r-1}\backslash S_r)|)}$.  The randomness used by $\mathcal{P}$ is independent from $S\backslash S_{r-1}$ for every $r\in[R]$.  Therefore, $\E b_r\ge 1-\delta\ge \frac{1}{2}$, and $b_r$ is independent from $|A_r \cap (S_{r-1}\backslash S_r)|$.  We have $\E|A_r \cap (S_{r-1}\backslash S_r)|=\frac{k}{2}$, and thus $\E(|S|-|B|)\ge \frac{kR}{4}$.  By Lemma~\ref{lemma:bits-saving}, $\log (^n_m)-\E\log (^n_{|B|})\ge \frac{kR}{4}\cdot \log (\frac{n}{m}-1) \ge \frac{kR}{9}\log (\frac{n}{k})$.  Moreover, $\frac{kR}{10}\log \frac{n}{k}\ge R$.  Thus we have $\s = \Omega(kR\log\frac{n}{k}) = \Omega(k\log^2 \frac{n}{k} )$.
\end{proof}

\section*{Acknowledgments}
Initially the authors were focused on proving optimal lower bounds for samplers, but we thank Vasileios Nakos for pointing out that our $\ur^\subset$ lower bound immediately implies a tight lower bound for finding a duplicate in data streams as well. Also, initially our proof of Lemma~\ref{lem:information} incurred an additive $1$ in the numerator of the right hand side of \eqref{eqn:adaptivity}. This is clearly suboptimal for small $I(X; Y)$ (for example, consider $I(X; Y) = 0$, in which case the right hand side should be $\delta$ and not $1/\log(1/\delta)$)). We thank T.S.\ Jayram for pointing out that a slight modification of our proof could actually replace the additive $1$ with the binary entropy function (and also for showing us a different proof of this lemma, which resembles the standard proof of Fano's inequality).

\bibliographystyle{alpha}

\begin{thebibliography}{GMWW14}

\bibitem[AGM12a]{AhnGM12a}
Kook~Jin Ahn, Sudipto Guha, and Andrew McGregor.
\newblock Analyzing graph structure via linear measurements.
\newblock In {\em Proceedings of the 23$^{rd}$ {ACM-SIAM} Symposium on Discrete
  Algorithms (SODA)}, pages 459--467, 2012.

\bibitem[AGM12b]{AhnGM12b}
Kook~Jin Ahn, Sudipto Guha, and Andrew McGregor.
\newblock Graph sketches: sparsification, spanners, and subgraphs.
\newblock In {\em Proceedings of the 31$^{st}$ {ACM} {SIGMOD-SIGACT-SIGART}
  Symposium on Principles of Database Systems (PODS)}, pages 5--14, 2012.

\bibitem[AGM13]{AhnGM13}
Kook~Jin Ahn, Sudipto Guha, and Andrew McGregor.
\newblock Spectral sparsification in dynamic graph streams.
\newblock In {\em Proceedings of the 16$^{th}$ International Workshop on
  Approximation Algorithms for Combinatorial Optimization Problems (APPROX)},
  pages 1--10, 2013.

\bibitem[AKL17]{AssadiKL17}
Sepehr Assadi, Sanjeev Khanna, and Yang Li.
\newblock On estimating maximum matching size in graph streams.
\newblock In {\em Proceedings of the 28$^{th}$ Annual {ACM-SIAM} Symposium on
  Discrete Algorithms (SODA)}, pages 1723--1742, 2017.

\bibitem[AKLY16]{AssadiKLY16}
Sepehr Assadi, Sanjeev Khanna, Yang Li, and Grigory Yaroslavtsev.
\newblock Maximum matchings in dynamic graph streams and the simultaneous
  communication model.
\newblock In {\em Proceedings of the 27$^{th}$ Annual {ACM-SIAM} Symposium on
  Discrete Algorithms (SODA)}, pages 1345--1364, 2016.

\bibitem[AKO11]{AndoniKO11}
Alexandr Andoni, Robert Krauthgamer, and Krzysztof Onak.
\newblock Streaming algorithms via precision sampling.
\newblock In {\em Proceedings of the 52$^{nd}$ Annual IEEE Symposium on
  Foundations of Computer Science (FOCS)}, pages 363--372, 2011.

\bibitem[BHNT15]{BhattacharyaHNT15}
Sayan Bhattacharya, Monika Henzinger, Danupon Nanongkai, and Charalampos~E.
  Tsourakakis.
\newblock Space- and time-efficient algorithm for maintaining dense subgraphs
  on one-pass dynamic streams.
\newblock In {\em Proceedings of the 47$^{th}$ Annual {ACM} on Symposium on
  Theory of Computing (STOC)}, pages 173--182, 2015.

\bibitem[BS15]{BuryS15}
Marc Bury and Chris Schwiegelshohn.
\newblock Sublinear estimation of weighted matchings in dynamic data streams.
\newblock In {\em Proceedings of the 23$^{rd}$ Annual European Symposium on
  Algorithms (ESA)}, pages 263--274, 2015.

\bibitem[CCE{\etalchar{+}}16]{ChitnisCEHMMV16}
Rajesh Chitnis, Graham Cormode, Hossein Esfandiari, MohammadTaghi Hajiaghayi,
  Andrew McGregor, Morteza Monemizadeh, and Sofya Vorotnikova.
\newblock Kernelization via sampling with applications to finding matchings and
  related problems in dynamic graph streams.
\newblock In {\em Proceedings of the 27$^{th}$ Annual {ACM-SIAM} Symposium on
  Discrete Algorithms (SODA)}, pages 1326--1344, 2016.

\bibitem[CCHM15]{ChitnisCHM15}
Rajesh~Hemant Chitnis, Graham Cormode, Mohammad~Taghi Hajiaghayi, and Morteza
  Monemizadeh.
\newblock Parameterized streaming: Maximal matching and vertex cover.
\newblock In {\em Proceedings of the 26$^{th}$ Annual {ACM-SIAM} Symposium on
  Discrete Algorithms (SODA)}, pages 1234--1251, 2015.

\bibitem[CF14]{CormodeF14}
Graham Cormode and Donatella Firmani.
\newblock A unifying framework for $\ell_0$-sampling algorithms.
\newblock {\em Distributed and Parallel Databases}, 32(3):315--335, 2014.
\newblock Preliminary version in ALENEX 2013.

\bibitem[CK04]{CoppersmithK04}
Don Coppersmith and Ravi Kumar.
\newblock An improved data stream algorithm for frequency moments.
\newblock In {\em Proceedings of the 15$^{th}$ Annual {ACM-SIAM} Symposium on
  Discrete Algorithms (SODA)}, pages 151--156, 2004.

\bibitem[CMR05]{CormodeMR05}
Graham Cormode, S.~Muthukrishnan, and Irina Rozenbaum.
\newblock Summarizing and mining inverse distributions on data streams via
  dynamic inverse sampling.
\newblock In {\em Proceedings of the 31$^{st}$ International Conference on Very
  Large Data Bases (VLDB)}, pages 25--36, 2005.

\bibitem[DM16]{DinurM16}
Irit Dinur and Or~Meir.
\newblock Toward the {KRW} composition conjecture: Cubic formula lower bounds
  via communication complexity.
\newblock In {\em Proceedings of the 31$^{st}$ Conference on Computational
  Complexity (CCC)}, pages 3:1--3:51, 2016.

\bibitem[EHW16]{EsfandiariHW16}
Hossein Esfandiari, MohammadTaghi Hajiaghayi, and David~P. Woodruff.
\newblock Brief announcement: Applications of uniform sampling: Densest
  subgraph and beyond.
\newblock In {\em Proceedings of the 28$^{th}$ {ACM} Symposium on Parallelism
  in Algorithms and Architectures (SPAA)}, pages 397--399, 2016.

\bibitem[EIRS91]{EIRS91}
Jack Edmonds, Russell Impagliazzo, Steven Rudich, and Jiri Sgall.
\newblock Communication complexity towards lower bounds on circuit depth.
\newblock In {\em Proceedings of the 32$^{nd}$ Annual IEEE Symposium on the
  Foundations of Computer Science (FOCS)}, pages 249--257, 1991.

\bibitem[FIS08]{FrahlingIS08}
Gereon Frahling, Piotr Indyk, and Christian Sohler.
\newblock Sampling in dynamic data streams and applications.
\newblock {\em Int. J. Comput. Geometry Appl.}, 18(1/2):3--28, 2008.
\newblock Preliminary version in SOCG 2005.

\bibitem[FT16]{FarachColtonT16}
Martin Farach{-}Colton and Meng{-}Tsung Tsai.
\newblock Tight approximations of degeneracy in large graphs.
\newblock In {\em Proceedings of the 12$^{th}$ Latin American Symposium on
  Theoretical Informatics (LATIN)}, pages 429--440, 2016.

\bibitem[GKKT15]{GibbKKT15}
David Gibb, Bruce~M. Kapron, Valerie King, and Nolan Thorn.
\newblock Dynamic graph connectivity with improved worst case update time and
  sublinear space.
\newblock {\em CoRR}, abs/1509.06464, 2015.

\bibitem[GMT15]{GuhaMT15}
Sudipto Guha, Andrew McGregor, and David Tench.
\newblock Vertex and hyperedge connectivity in dynamic graph streams.
\newblock In {\em Proceedings of the 34$^{th}$ {ACM} Symposium on Principles of
  Database Systems (PODS)}, pages 241--247, 2015.

\bibitem[GMWW14]{GavinskyMWW14}
Dmitry Gavinsky, Or~Meir, Omri Weinstein, and Avi Wigderson.
\newblock Toward better formula lower bounds: an information complexity
  approach to the {KRW} composition conjecture.
\newblock In {\em Proceedings of the 46$^{th}$ Annual ACM Symposium on Theory
  of Computing (STOC)}, pages 213--222, 2014.

\bibitem[GR09]{GopalanR09}
Parikshit Gopalan and Jaikumar Radhakrishnan.
\newblock Finding duplicates in a data stream.
\newblock In {\em Proceedings of the 20$^{th}$ Annual {ACM-SIAM} Symposium on
  Discrete Algorithms (SODA)}, pages 402--411, 2009.

\bibitem[HPP{\etalchar{+}}15]{HegemanPPSS15}
James~W. Hegeman, Gopal Pandurangan, Sriram~V. Pemmaraju, Vivek~B. Sardeshmukh,
  and Michele Scquizzato.
\newblock Toward optimal bounds in the congested clique: Graph connectivity and
  {MST}.
\newblock In {\em Proceedings of the 34$^{th}$ Annual {ACM} Symposium on
  Principles of Distributed Computing (PODC)}, pages 91--100, 2015.

\bibitem[HW90]{HastadW90}
Johan H{\aa}stad and Avi Wigderson.
\newblock Composition of the universal relation.
\newblock In {\em Proceedings of a {DIMACS} Workshop on Advances In
  Computational Complexity Theory}, pages 119--134, 1990.

\bibitem[JST11]{JowhariST11}
Hossein Jowhari, Mert Sa{\u{g}}lam, and G{\'a}bor Tardos.
\newblock Tight bounds for {Lp} samplers, finding duplicates in streams, and
  related problems.
\newblock In {\em Proceedings of the 30$^{th}$ ACM SIGMOD-SIGACT-SIGART
  Symposium on Principles of Database Systems (PODS)}, pages 49--58. ACM, 2011.

\bibitem[KLM{\etalchar{+}}14]{KapralovLMMS14}
Michael Kapralov, Yin Tat Lee, Cameron Musco, Christopher Musco, and Aaron Sidford.
\newblock Single pass spectral sparsification in dynamic streams.
\newblock In {\em Proceedings of the 55$^{th}$ Annual IEEE Symposium on the
  Foundations of Computer Science (FOCS)}, pages 561--570, 2014.

\bibitem[KKM13]{KapronKM13}
Bruce~M. Kapron, Valerie King, and Ben Mountjoy.
\newblock Dynamic graph connectivity in polylogarithmic worst case time.
\newblock In {\em Proceedings of the 24$^{th}$ Annual {ACM-SIAM} Symposium on
  Discrete Algorithms (SODA)}, pages 1131--1142, 2013.

\bibitem[KNW10]{KaneNW10}
Daniel~M. Kane, Jelani Nelson, and David~P. Woodruff.
\newblock An optimal algorithm for the distinct elements problem.
\newblock In {\em Proceedings of the 29$^{th}$ ACM SIGMOD-SIGACT-SIGART
  Symposium on Principles of Database Systems (PODS)}, pages 41--52, 2010.

\bibitem[Kon15]{Konrad15}
Christian Konrad.
\newblock Maximum matching in turnstile streams.
\newblock In {\em Proceedings of the 23$^{rd}$ Annual European Symposium on
  Algorithms (ESA)}, pages 840--852, 2015.

\bibitem[KRW95]{KarchmerRW95}
Mauricio Karchmer, Ran Raz, and Avi Wigderson.
\newblock Super-logarithmic depth lower bounds via the direct sum in
  communication complexity.
\newblock {\em Computational Complexity}, 5(3-4):191--204, 1995.

\bibitem[KW90]{KarchmerW90}
Mauricio Karchmer and Avi Wigderson.
\newblock Monotone circuits for connectivity require super-logarithmic depth.
\newblock {\em {SIAM} J. Discrete Math.}, 3(2):255--265, 1990.

\bibitem[McG14]{McGregor14}
Andrew McGregor.
\newblock Graph stream algorithms: a survey.
\newblock {\em {SIGMOD} Record}, 43(1):9--20, 2014.

\bibitem[MTVV15]{McGregorTVV15}
Andrew McGregor, David Tench, Sofya Vorotnikova, and Hoa~T. Vu.
\newblock Densest subgraph in dynamic graph streams.
\newblock In {\em Proceedings of the 40$^{th}$ International Symposium on
  Mathematical Foundations of Computer Science (MFCS)}, pages 472--482, 2015.

\bibitem[Mut05]{Muthukrishnan05}
S.~Muthukrishnan.
\newblock {Data Streams: Algorithms and Applications}.
\newblock {\em Foundations and Trends in Theoretical Computer Science},
  1(2):117--236, 2005.

\bibitem[MW10]{MonemizadehW10}
Morteza Monemizadeh and David~P. Woodruff.
\newblock 1-pass relative-error $l_p$-sampling with applications.
\newblock In {\em Proceedings of the 21$^{st}$ Annual {ACM-SIAM} Symposium on
  Discrete Algorithms (SODA)}, pages 1143--1160, 2010.

\bibitem[PRS16]{Pandurangan0S16}
Gopal Pandurangan, Peter Robinson, and Michele Scquizzato.
\newblock Fast distributed algorithms for connectivity and {MST} in large
  graphs.
\newblock In {\em Proceedings of the 28$^{th}$ {ACM} Symposium on Parallelism
  in Algorithms and Architectures (SPAA)}, pages 429--438, 2016.

\bibitem[Tar07]{Tarui07}
Jun Tarui.
\newblock Finding a duplicate and a missing item in a stream.
\newblock In {\em Proceedings of the 4$^{th}$ Annual Conference on Theory and
  Applications of Models of Computation (TAMC)}, pages 128--135, 2007.

\bibitem[TZ97]{TardosZ97}
G{\'{a}}bor Tardos and Uri Zwick.
\newblock The communication complexity of the universal relation.
\newblock In {\em Proceedings of the 12$^{th}$ Annual {IEEE} Conference on
  Computational Complexity (CCC)}, pages 247--259, 1997.

\bibitem[Wan15]{Wang15}
Zhengyu Wang.
\newblock An improved randomized data structure for dynamic graph connectivity.
\newblock {\em CoRR}, abs/1510.04590, 2015.

\end{thebibliography}

\newcommand{\etalchar}[1]{$^{#1}$}

\appendix

\section{Appendix}

\subsection{A tight upper bound for $\randcom^{\rightarrow,pub}_\delta(\ur_k)$}

In \cite[Proposition 1]{JowhariST11} it is shown that $\randcom^{\rightarrow,pub}_\delta(\ur_k) = O(\min\{n,t\log^2 n\})$ for $t = \max\{k,\log(1/\delta)\}$. Here we show that a minor modification of their protocol in fact shows the correct complexity $\randcom^{\rightarrow,pub}_\delta(\ur_k) = O(\min\{n,t\log^2(n/t)\})$, which given our new lower bound, is optimal up to a constant factor for the full range of $n,k,\delta$ as long as $\delta$ is bounded away from $1$.

Recall Alice and Bob receive $x, y\in\{0,1\}^n$, respectively, and share a public random string. Alice must send a single message $M$ to Bob, from which Bob must recover $\min\{k, \|x-y\|_0\}$ indices $i\in[n]$ for which $x_i\neq y_i$. Bob is allowed to fail with probability $\delta$. The fact that $\randcom^{\rightarrow,pub}_\delta(\ur_k) \le n$ is obvious: Alice can simply send the message $M = x$, and Bob can then succeed with failure probability $0$. We thus now show $\randcom^{\rightarrow,pub}_{e^{-ck}}(\ur_k) \le k\log^2(n/k)$ for some constant $c>0$, which completes the proof of the upper bound. We assume $k\le n/2$ (otherwise, Alice sends $x$ explicitly).

As mentioned, the protocol we describe is nearly identical to one in \cite{JowhariST11} (see also \cite{CormodeF14}). We will describe the new protocol here, then point out the two minor modifications that improve the $O(k\log^2 n)$ bound to $O(k\log^2(n/k))$ in Remark~\ref{rem:recov}. We first need the following lemma.

\begin{lemma}\label{lem:sparse-recov}
Let $\F_q$ be a finite field and $n>1$ an integer. Then for any $1\le k\le \frac n2$, there exists $\Pi_k\in \F_q^{m\times n}$ for $m = O(k\log_q(qn/k))$ s.t.\ for any $w\neq w'\in\F_q^n$ with $\|w\|_0, \|w'\|_0 \le k$, $\Pi_k w \neq \Pi_k w'$.
\end{lemma}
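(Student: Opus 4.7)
The plan is to reduce the statement to a claim about columns of $\Pi_k$ being in ``general position'', and then prove it with a standard probabilistic / union-bound argument. The key reduction is that $\Pi_k w = \Pi_k w'$ if and only if $\Pi_k(w-w') = 0$, and since $v := w - w'$ is nonzero with $\|v\|_0 \le \|w\|_0 + \|w'\|_0 \le 2k$, it suffices to exhibit a matrix $\Pi_k \in \F_q^{m\times n}$ such that no nonzero $2k$-sparse vector lies in its kernel; equivalently, every $2k$-subset of columns of $\Pi_k$ is linearly independent over $\F_q$.

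Next I would draw $\Pi_k \in \F_q^{m\times n}$ uniformly at random. For any fixed nonzero $v\in\F_q^n$, the entries of $\Pi_k v$ are i.i.d.\ uniform on $\F_q$ (each row is an independent uniform dot product with $v$, which has a unit coordinate), so $\Pr[\Pi_k v = 0] = q^{-m}$. The number of nonzero $v$ with $\|v\|_0 \le 2k$ is at most $\binom{n}{2k}(q-1)^{2k} \le (eqn/(2k))^{2k}$, so by a union bound the probability that some nonzero $2k$-sparse vector is killed by $\Pi_k$ is at most $q^{-m}(eqn/(2k))^{2k}$. This is strictly less than $1$ as soon as $m \ge 2k\log_q(eqn/(2k))$. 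Since $\log_q(eqn/(2k)) \le \log_q(3) + \log_q(qn/k) = O(\log_q(qn/k))$ for $q\ge 2$, we obtain the promised bound $m = O(k\log_q(qn/k))$, and the probabilistic method guarantees the existence of a suitable $\Pi_k$.

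I do not expect any real obstacle: the argument is essentially a one-line reduction to $2k$-sparse recovery followed by a routine counting bound. The only care point is making sure the $\log_q(qn/k)$ rather than $\log_q(n/k)$ form is obtained, which just requires keeping the constant factor $eqn/(2k)$ (not $en/(2k)$) inside the log when computing the expected number of bad sparse vectors; this is why the factor of $q$ appears inside the logarithm, and it matters in the regime $n/k < q$ where otherwise $\log_q(n/k)$ would be zero or negative. As an alternative, one could take $\Pi_k$ to be the parity-check matrix of a linear code over $\F_q$ of minimum distance greater than $2k$ (e.g., a suitable BCH code), which achieves the same bound explicitly; but the probabilistic proof above is the most direct.
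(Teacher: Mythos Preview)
Your proposal is correct and follows essentially the same approach as the paper: reduce to showing no nonzero $2k$-sparse vector lies in the kernel of a uniformly random $\Pi_k$, then apply the union bound over at most $\binom{n}{2k}q^{2k}$ such vectors against the $q^{-m}$ probability each is killed. The paper's write-up is nearly identical (it uses $q^{2k}$ rather than your slightly sharper $(q-1)^{2k}$, and omits your side remarks about codes and the $q$ inside the logarithm), so there is nothing substantively different to compare.
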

\begin{proof}
The proof is via the probabilistic method. $\Pi_k w = \Pi_k w'$ iff $\Pi_k (w - w') = 0$. Note $v = w-w'$ has $\|v\|_0 \le 2k$. Thus it suffices to show that such a $\Pi_k$ exists with no $(2k)$-sparse vector in its kernel. The number of vectors $v\in\F_q^n$ with $\|v_0\| \le 2k$ is at most $\binom{n}{2k}\cdot q^{2k}$. For any fixed $v$, $\Pr(\Pi_k v = 0) = q^{-m}$. Thus 
$$\Pr(\exists v, \|v\|_0 \le 2k: \Pi_k v = 0) \le \binom{n}{2k}\cdot q^{2k} \cdot q^{-m}$$ 
by a union bound. The above is strictly less than $1$ for $m > 2k + \log_q\binom{n}{2k}$, yielding the claim.
\end{proof}

\begin{corollary}\label{cor:ksparse}
Let $\F_q$ be a finite field and $n>1$ an integer. Then for any $1\le k\le \frac n2$, there exists $\Pi_k\in \F_q^{m\times n}$ for $m = O(k\log_q(qn/k))$ together with an algorithm $\mathcal{R}$ such that for any $w\in\F_q^n$ with $\|w\|_0 \le k$, $\mathcal{R}(\Pi_k w) = w$.
\end{corollary}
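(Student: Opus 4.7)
The plan is to derive the corollary immediately from Lemma~\ref{lem:sparse-recov} by exhibiting a (possibly inefficient) decoder $\mathcal{R}$. The key observation is that the statement of the corollary only requires the existence of an algorithm $\mathcal{R}$; it imposes no bound on the running time of $\mathcal{R}$. Under this weak requirement, brute-force search suffices and the proof is essentially one paragraph.

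Concretely, first invoke Lemma~\ref{lem:sparse-recov} to fix a matrix $\Pi_k \in \F_q^{m\times n}$ with $m = O(k\log_q(qn/k))$ such that $\Pi_k w \neq \Pi_k w'$ whenever $w\neq w'$ are both $k$-sparse. Then define $\mathcal{R}$ on input $v\in\F_q^m$ to enumerate all vectors $w\in\F_q^n$ with $\|w\|_0 \le k$, compute $\Pi_k w$ for each, and return any $w$ (breaking ties arbitrarily) for which $\Pi_k w = v$; output $0$ if none is found. For correctness, suppose the input is $v = \Pi_k w$ with $\|w\|_0 \le k$. The lemma guarantees that $w$ is the \emph{unique} $k$-sparse preimage of $v$ under $\Pi_k$, so the enumeration finds exactly this $w$, whence $\mathcal{R}(v) = w$ as required.

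There is no real obstacle here: the lemma already supplies injectivity of $\Pi_k$ on the $k$-sparse vectors, and injectivity is exactly equivalent to the existence of a left inverse on the image. The only reason this is stated as a separate corollary (rather than being absorbed into the lemma) is that the upper-bound protocol in the appendix invokes the recovery procedure $\mathcal{R}$ as a named subroutine; the protocol does not care whether $\mathcal{R}$ runs in polynomial time, since the corollary is used only to bound communication, not computation.
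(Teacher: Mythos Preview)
Your proposal is correct and essentially identical to the paper's own proof: both invoke Lemma~\ref{lem:sparse-recov} for the matrix $\Pi_k$ and define $\mathcal{R}$ as a brute-force search over all $k$-sparse vectors, relying on the injectivity guaranteed by the lemma for uniqueness. Your additional remarks about why the corollary is stated separately are accurate but not needed for the proof itself.
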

\begin{proof}
Given Lemma~\ref{lem:sparse-recov}, a simple such $\mathcal{R}$ is as follows. Given some $y = \Pi_k w^*$ with $\|w^*\|_0 \le k$, $\mathcal{R}$ loops over all $w$ in $\F_q^n$ with $\|w\|_0 \le k$ and outputs the first one it finds for which $\Pi_k w = y$.
\end{proof}

The protocol for $\ur_k$ is now as follows. Alice and Bob use public randomness to pick commonly known random functions $h_0,\ldots,h_L:[n]\rightarrow\{0,1\}$ for $L = \lfloor\log_2(n/k)\rfloor$, such that for any $i\in[n]$ and for any $j$, $\Pr(h_j(i) = 1) = 2^{-j}$. They also agree on a matrix $\Pi_{16k}$ and $\mathcal{R}$ as described in Corollary~\ref{cor:ksparse} for a sufficiently large constant $C>0$ to be determined later, with $q = 3$. Thus $\Pi_{16k}$ has $m = O(k\log(n/k))$ rows. Alice then computes $v_j = \Pi_{16k} x|_{h_j^{-1}(1)}$ for $j=0,\ldots,L$ where $v_j\in\F_q^m$, and her message to Bob is $M = (v_0,\ldots,v_L)$. For $S\subseteq [n]$ and $x$ an $n$-dimensional vector, $x|_S$ denotes the $n$-dimensional vector with $(x|_S)_i = x_i$ for $i\in S$, and $(x|_S)_i = 0$ for $i\notin S$. Note Alice's message $M$ is $O(k\log^2(n/k))$ bits, as desired. Bob then executes the following algorithm and outputs the returned values.

\begin{algorithm}[H] 
  \caption{Bob's algorithm in the $\ur_k$ protocol.} \label{algo:bob-protocol}
  \begin{algorithmic}[1]
    \Procedure{Bob}{$v_0,\ldots,v_L$}
    \For {$j=L,L-1,\ldots,0$}
      \State $v_j \leftarrow v_j - \Pi_{16k} y|_{h_j^{-1}(1)}$
      \State $w_j\leftarrow \mathcal{R}(v_j)$
      \If {$\|w_j\|_0 \ge k$ or $j=0$}
      \State \Return an arbitrary $\min\{k, \|w_j\|_0\}$ elements from $\supp(w_j)$
      \EndIf
    \EndFor
    \EndProcedure
  \end{algorithmic}
\end{algorithm}

The correctness analysis is then as follows, which is nearly the same as the $\ell_0$-sampler of \cite{JowhariST11}. If Alice's input is $x$ and Bob's is $y$, let $a = x-y \in \{-1,0,1\}^n$, so that $a$ can be viewed as an element of $\F_3^n$. Also let $a_j = a|_{h_j^{-1}(1)}$. Then $\E \|v_j\|_0 = \|a\|_0\cdot 2^{-j}$, and since $0\le \|a\|_0 \le n$, there either (1) exists a unique $0\le j^*\le L$ such that $2k\le \E\|a_j\|_0\cdot 2^{-j^*}< 4k$, or (2) $\|a\|_0 < 2k$ (in which case we define $j^* = 0$). Let $\mathcal{E}$ be the event that $\|a_j\|_0 \le 16k$ simultaneously for all $j\le j^*$. Let $\mathcal{F}$ be the event that {\it either} we are in case (2), or we are in case (1) and $\|a_{j^*}\|_0 \ge k$ holds. Note that conditioned on $\mathcal{E}, \mathcal{F}$ both occurring, Bob succeeds by Corollary~\ref{cor:ksparse}.

We now just need to show $\Pr(\neg\mathcal{E} \wedge \neg\mathcal{F}) < e^{-\Omega(k)}$. We use the union bound. First, consider $\mathcal{F}$. If $j^* = 0$, then $\Pr(\neg\mathcal{F}) = 0$. If $j^*\neq 0$, then $\Pr(\neg\mathcal{F}) \le \Pr(\|a_{j^*}\|_0 < \frac 12 \cdot\E\|a_{j^*}\|_0)$, which is $e^{-\Omega(k)}$ by the Chernoff bound since $\E\|a_{j^*}\|_0 = \Theta(k)$. Next we bound $\Pr(\neg \mathcal{E})$. For $j\ge j^*$, we know $\E\|a_j\|_0 \le 4k/2^{j-j^*}$. Thus, letting $\mu$ denote $\E\|a_j\|_0$, 
\begin{equation}
\Pr(\|a_j\|_0 > 16k) < \left(\frac{e^{\frac{16k}{\mu} - 1}}{(\frac{16k}{\mu})^{\frac{16k}{\mu}}}\right)^\mu < \left(\frac{16k}{\mu}\right)^{-\Omega(k)} < (e^{-Ck})^{j-j^*}\label{eqn:geometric}
\end{equation}
for some constant $C>0$ by the Chernoff bound and the fact that $16k/\mu \ge 4 > e$. Recall that the Chernoff bound states that for $X$ a sum of independent Bernoullis,
$$
\forall \delta > 0,\ \Pr(X > (1+\delta) \E X) < \left(\frac{e^\delta}{(1+\delta)^{1+\delta}}\right)^{\E X} .
$$
Then by a union bound over $j\ge j^*$ and applying \eqref{eqn:geometric},
$$
\Pr(\neg \mathcal{E}) = \Pr(\exists j\ge j^*: \|a_j\|_0 > 16k) < \sum_{j=j^*}^\infty (e^{-Ck})^{j-j^*} = O(e^{-Ck}) .
$$

\begin{remark}\label{rem:recov}
\textup{
As already mentioned, the protocol given above and the one described in \cite{JowhariST11} using $O(k\log^2 n)$ bits differ in minor points. First: the protocol there used $\lfloor\log_2 n\rfloor$ different hash functions $h_j$, but as seen above, only $\lfloor \log_2(n/k)\rfloor$ are needed. This already improves one $\log n$ factor to $\log(n/k)$. The other improvement comes from replacing the $k$-sparse recovery structure with $2k$ rows used in \cite{JowhariST11} with our Corollary~\ref{cor:ksparse}. Note the matrix $\Pi_k$ in our corollary has even {\it more} rows, but the key point is that the bit complexity is improved. Whereas using a $k$-sparse recovery scheme as described in \cite{JowhariST11} would use $2k$ linear measurements of a $k$-sparse vector $w\in\{-1,0,1\}^n$ with $\log n$ bits per measurement (for a total of $O(k\log n)$ bits), we use $O(k\log(n/k))$ measurements with only $O(1)$ bits per measurement. The key insight is that we can work over $\F_3^n$ instead of $\R^n$ when the entries of $w$ are in $\{-1,0,1\}$, which leads to our slight improvement.
}
\end{remark}

\end{document}